%% file: extension.tex
\documentclass[journal,twoside,web]{ieeecolor} 

\usepackage{generic}
\usepackage{graphicx}
\usepackage{amsmath} 
\usepackage{amssymb}
\usepackage{amsfonts}
\usepackage{mathrsfs}
\usepackage{mathtools}
\usepackage{cite}
\usepackage[bbgreekl]{mathbbol}
\usepackage{multirow}
\usepackage{makecell}
\usepackage{booktabs}
\usepackage{caption}
\usepackage{xpatch}
\usepackage{float}
\usepackage{tikz}
\usepackage{pgfplots} 
\usepackage{pgfgantt}
\usepackage{pdflscape}
\usepackage{xcolor}
\usepackage[table]{xcolor}
\usepackage{subcaption}

\pgfplotsset{compat=newest}

\let\labelindent\relax
\usepackage{enumitem}

\usepackage{amsthm}

\DeclareSymbolFontAlphabet{\mathbb}{AMSb}
\DeclareSymbolFontAlphabet{\mathbbl}{bbold}

\DeclareMathOperator{\diag}{diag}
\DeclareMathOperator{\rank}{rank}
\DeclareMathOperator{\blkdiag}{blkdiag}

\DeclareMathOperator{\diff}{d}

\DeclareMathOperator*{\argmin}{argmin}

\DeclareMathOperator{\Ima}{Im}
\DeclareMathOperator{\vspan}{span}

\newcommand{\mc}{\mathcal}
\newcommand{\ddt}{\tfrac{\diff}{\diff \!t}}
\newcommand{\norm}[1]{\left \lVert #1 \right \rVert}

\makeatletter
\xpatchcmd{\@thm}{\thm@headpunct{.}}{\thm@headpunct{}}{}{}
\makeatother

\definecolor{lightgray}{gray}{0.9}

\newtheorem{theorem}{Theorem}
\newtheorem{lemma}{Lemma}
\newtheorem{proposition}{Proposition}
\newtheorem{assumption}{Assumption}
\newtheorem{remark}{Remark}
\newtheorem{definition}{Definition}

\newtheorem{corollary}{Corollary}

\title{\LARGE \bf  Stability and convergence of multi-converter systems using projection-free power-limiting droop control}

\author{Amirhossein Iraniparast  and Dominic Gro\ss{} \thanks{This work was supported in part by the National Science Foundation under Grant No. 2143188. A. Iraniparast and D. Gro\ss{} are with the Department of Electrical and Computer Engineering at the University of Wisconsin-Madison, USA; e-mail: iraniparast@wisc.edu, dominic.gross@wisc.edu}}

\begin{document}
\maketitle

\begin{abstract} 
    In this paper, we propose a projection-free power-limiting droop control for grid-connected power electronics and an associated constrained flow problem. In contrast to projection-based power-limiting droop control, the novel projection-free power-limiting droop control results in networked dynamics that are semi-globally exponentially stable with respect to the set of optimizers of the constrained flow problem. Under a change to edge coordinates, the overall networked dynamics arising from projection-free power-limiting droop control coincide with the projection-free primal-dual dynamics associated with an augmented Lagrangian of the constrained flow problem. Leveraging this result, we (i) provide a bound on the convergence rate of the projection-free networked dynamics, (ii) propose a tuning method for controller parameters to improve the bound on the convergence rate, and (iii) analyze the relationship of the bound on the convergence rate and connectivity of the network. Finally, the analytical results are illustrated using an Electromagnetic transient (EMT) simulation.
\end{abstract}
\section{Introduction}
The ongoing shift from synchronous machine-based power generation towards power electronics-interfaced generation and energy storage results in significant changes to power system frequency dynamics. Specifically, grid-connected power electronics differ from conventional synchronous generators in terms of their fast response (i.e., milliseconds to seconds) and resource constraints (e.g. power and current limits). Accordingly, incorporating renewable generation resources into large-scale power system challenges standard operating and control paradigm and jeopardizes system stability~\cite{KJZ+2017,MDHHV2018}. For instance, stability analysis of emerging power systems crucially requires considering the constraints of power converters and renewable generation resources such as power limit.

Today, most renewables are interfaced by dc/ac voltage source converters (VSC) use so-called grid-following control. This control paradigm requires a stable and slowly changing ac voltage (i.e., magnitude and frequency) and jeopardizes grid stability when disturbance occur~\cite{KH2024}. Since grid-following explicitly controls the converter current/power, incorporating power limits is straightforward. In contrast, grid-forming converters, that are commonly envisioned to be the cornerstone of future power systems, impose stable and self-synchronizing ac voltage dynamics at their grid terminals. Although prevalent grid-forming controls including droop control\cite{CDA1993}, virtual synchronous machine control (VSM)\cite{DSF2015}, and dispatchable virtual oscillator control (dVOC)\cite{GCBD2019} have been investigated in detail constraints are not accounted for in their analysis\cite{DB2012,SOARS2014,SGRS2013,DS2014,SGCD2021}.

However, from a practical point of view, resource and converter constraints are a significant concern. The majority of works on grid-forming control under constraints in the application oriented literature has focused on current-limiting (see \cite{BCL2024} for a recent survey). To address power-limiting, power-limiting droop control combines conventional droop control with proportional-integral limiters that activate when the converter reaches its power limit~\cite{DLK2019}. While asymptotic stability of power-limiting droop control has been established~\cite{IG2025}, this result does not quantify the impact of controller gains or network parameters.

Continuous-time primal-dual dynamics~\cite{CMC2016} have been widely used to study system-level controls arising from equality constrained optimization problems such as secondary control and economic dispatch of multi-machine systems~\cite{LZC2016} and power flow control~\cite{CDA2020}. In contrast, directly applying primal-dual dynamics to primary control design results in controls with communication requirements that are infeasible in practice.

This work leverages projection-free primal-dual dynamics~\cite{QL2019} in edge coordinates to develop a novel grid-forming power-limiting primary control for grid-connected power electronics and analyze the resulting multi-converter system frequency dynamics. Our novel projection-free power-limiting droop control is distinct from power-limiting droop control~\cite{DLK2019,IG2025} and enables rigorous bounds on the convergence rate of the multi-converter system frequency to the optimal solution of an associated constrained dispatch problem. 

To characterize the steady-states of these networked dynamics, a generic constrained network flow problem can be formulated whose primal-dual dynamics cannot be implemented using only local information~\cite{IG2025}. In contrast, applying primal-dual dynamics in edge coordinates~\cite{ZM2011} results in a fully decentralized converter control that is asymptotically stable with respect to Karush-Kuhn-Tucker (KKT) points of the constrained flow problem in edge coordinates.

This enables representing the networked dynamics as primal-dual dynamics and apply well-known stability results~\cite{CMC2016}. Notably, the networked dynamics in nodal coordinates are globally asymptotically stable with respect to the set of optimizers of its associated constrained network problem in nodal coordinates~\cite{IG2025}. This result directly establishes frequency stability and synchronization of networks of converters using power-limiting grid-forming droop control. In addition, upon convergence, the converters exhibit power-sharing properties similar to power-sharing in unconstrained droop control~\cite{SDB2013}.

However, the discontinuity of power-limiting droop control hinders convergence analysis and no convergence rate is provided in~\cite{IG2025}. From a practical point of view, bounds on the convergence rate are crucial for, e.g., tuning controls and analyzing performance. To address this challenge, continuous primal-dual dynamics associated with an augmented Lagrangian have been introduced that are exponentially stable and admit rigorous bounds on the convergence rate~\cite{QL2019, TQL2020}.

The main contribution of this work is to leverage the projection-free primal-dual dynamics~\cite{TQL2020} to develop a novel projection-free power-limiting droop control, establish semi-global exponential stability of the resulting networked multi-converter frequency dynamics with respect to KKT points of an associated constrained flow problem, obtain a bound on the convergence rate of the networked multi-converter dynamics, and analyze the impact of control gains and network parameters on the convergence rate. 

In particular, our projection-free power-limiting droop control results in projection-free networked dynamics whose Carath\'eodory solutions are semi-globally exponentially stable with respect to KKT points of the constrained flow problem introduced in~\cite{IG2025}. To obtain this result, we show that the projection-free networked dynamics corresponds to the primal-dual dynamics of the constrained network flow problem in edge coordinates. Therefore, all existing results on the properties of the KKT points of the constrained flow problem (e.g., synchronous frequency) and steady states of projection-based power-limiting droop control~\cite{IG2025} immediately hold for the proposed projection-free power-limiting droop control.

A key contribution of this work is a bound on the convergence rate of the projection-free networked dynamics. To this end, we characterize the active constraint set and graph of nodes with active constraints. Notably, to evaluate the convergence rate of the resulting projection-free power-limiting droop control, we link the Jacobian matrix of the constraints of the network flow problem to the Laplacian matrix of the graph of nodes with active constraints. This result enables bounding the convergence rate of the networked dynamics as a function of the control gains and properties of the network (e.g., connectivity, maximum node degree, edge weights). In turn, this allows us to propose a control tuning  that improve the bound on the convergence rate. In addition, we show that, under mild technical assumptions, the bound on the convergence rate can be improved by adding edges (e.g., transmission lines) to the graph that increase the connectivity of the graph. Finally, an Electromagnetic transient (EMT) simulation of the IEEE 9-bus system is used to illustrate the results and validate that the proposed control tuning improves control performance.

This paper is organized as follows. Sec.~\ref{sec:preliminary} introduces the power network and converter model, control objectives, and review of projection-based network dynamics. Next, Sec.~\ref{sec:projectionfree} defines a novel projection-free networked dynamics and summarizes the main analytical results. The stability analysis of the networked dynamics in nodal and edge coordinates is presented in Sec.~\ref{sec:stability_analysis}. The sensitivity of the convergence rate with respect to control gains and network parameters is analyzed in Sec~\ref{sec:convrate}. A numerical case study to validate the main results is provided in Sec.~\ref{sec:numerical}. Finally, Sec.~\ref{sec:conclusion} provides conclusions and topics for future work.

\subsection*{Notation}
We use $\mathbb{R}$ and $\mathbb N$ to denote the set of real and natural numbers and define, e.g., $\mathbb{R}_{\geq 0}\coloneqq \{x \in \mathbb R \vert x \geq 0\}$. Moreover, we use $\mathbb{S}_{\succ 0}^n$ and $\mathbb{S}_{\succeq 0}^n$ to denote the set of real positive definite and positive semidefinite matrices. For column vectors $x\in\mathbb{R}^n$ and $y\in\mathbb{R}^m$ we define $(x,y) = [x^\mathsf{T}, y^\mathsf{T}]^\mathsf{T} \in \mathbb{R}^{n+m}$. Moreover, $\norm{x}_{Q}=\sqrt{x^\mathsf{T}Qx}$ denotes the weighted Euclidean norm and $\norm{x}_{\mc C} \coloneqq \min_{z\in\mc C} \norm{z-x}$ denotes the point to set distance. Furthermore, $I_n$, $\mathbbl{0}_{n\times m}$, $\mathbbl{0}_{n}$, and $\mathbbl{1}_n$ denote the $n$-dimensional identity matrix, $n \times m$ zero matrix, and column vectors of zeros and ones of length $n$ respectively. $|\mc X|$ denotes the cardinality of a discrete set $\mc X$. The Kronecker product is denoted by $\otimes$. We use $\varphi_x(t,x_0)$ to denote a (Caratheodory) solution of $\ddt x = f(x)$ at time $t \in \mathbb{R}_{\geq 0}$ starting from $x_0$ at time $t=0$.

\section{preliminaries and network model}\label{sec:preliminary}
    We first introduce the ac power system model, converter model, and control objectives studied throughout the paper.

    \subsection{Power network and converter model}
The power network topology is modeled by a simple, connected and undirected graph $\mathcal{G}\coloneqq \{\mathcal{N}, \mathcal{E}, \mathcal{W}\}$ with edge set $\mathcal{E}\coloneqq\mathcal{N}\times\mathcal{N}$ corresponding to $|\mathcal{E}|=e$ transmission lines, set of nodes $\mathcal{N}$ corresponding to $|\mathcal{N}|=n$ voltage source converters, and set of edge weights $\mathcal{W}=\left\{w_1, \ldots, w_e\right\}$ with $w_i \in \mathbb{R}_{>0}$ for all $i \in \{1,\ldots,e\}$ modeling  line susceptances~\cite{DB2013}. 

    We model each grid-forming voltage source converter $i \in \mathcal{N}$ as a voltage source imposing an ac voltage with phase angle $\theta_i \in \mathbb{R}$ and nominal voltage magnitude that injects an active power denoted by $P_i \in \mathbb{R}$. During normal operation (i.e., outside rare electrical faults), the constraints of the converter and its resource (e.g., batteries, renewables) can be mapped to lower and upper converter power limits are denoted by $P_{\ell} \coloneqq \left(P_{\ell, 1}, \ldots, P_{\ell, n}\right) \in \mathbb{R}^n$ and $P_{u} \coloneqq \left(P_{u, 1}, \ldots P_{u, n}\right) \in \mathbb{R}^n$. Finally, for every $i \in \mathcal{N}$, we use $P_{L, i} \in \mathbb{R}$ to express load mapped to converter buses using Kron-reduction~\cite{DB2013}.

    In the context of frequency stability analysis of transmission systems, line losses are typically negligible. Thus, for brevity of the presentation, we use the widely accepted lossless network model with decoupled frequency and voltage dynamics. The results can be extended to lossy networks by leveraging rotated power measurements~\cite[Sec.~IV-C]{RLB+2012}. Linearizing the ac power flow equation at the nominal voltage magnitude and zero angle difference between nodes, results in the vector $P = \left(P_1, \ldots, P_n\right) \in \mathbb{R}^n$ of converter power injections
    \begin{align}\label{eq:dcpfeq}
        P \coloneqq L\theta + P_{L}, 
    \end{align}
    where $L\coloneqq BWB^\mathsf{T}$ is the Laplacian matrix of the graph $\mc G$, $B \in \{-1,0,1\}^{n \times e}$ denotes the oriented incidence matrix of $\mc G$, and $W=\diag\{w_i\}_{i=1}^{e}$. Moreover, $\theta = \left(\theta_1, \ldots, \theta_n\right) \in \mathbb{R}^n$ is the vector of ac voltage phase angles (relative to $\omega_0 t$ with nominal frequency $\omega_0 \in \mathbb{R}_{> 0}$) and $P_{L} \coloneqq \left(P_{L,1}, \ldots, P_{L, n}\right) \in \mathbb{R}^n$ is the vector of active power loads at every node. The linearized model \eqref{eq:dcpfeq} is widely accepted for studying frequency stability. Broadly speaking, in the context of typical converter-dominated transmission systems, the nonlinearity of the power flow is negligible within the feasible set of converter power injections (see Sec.~\ref{sec:numerical}).
    
    The following assumption formalizes that the converters need to have sufficient capacity to serve the load. In practice this is ensured during system planning.
\begin{assumption}[\textbf{Feasible injection limits and loads}]\label{assum:feas}
For all $i \in \mathcal{N}$, the limits $P_{\ell,i} \in \mathbb{R}^n$ and $P_{u,i} \in \mathbb{R}^n$ satisfy $P_{\ell,i} < P_{u,i}$. Moreover, the disturbance input $P_L \in \mathbb{R}^n$ satisfies $\sum_{i=1}^n P_{\ell,i} < \sum_{i=1}^n  P_{L,i} < \sum_{i=1}^n P_{u,i}$.

\end{assumption}

\subsection{Control Objectives}
Our objective is to design a \emph{decentralized} feedback controller that uses only information available at each node $i \in \mc N$ (e.g., power injection, frequency) to control the power system to the solution of constrained flow problem (CFP) given by
        \begin{align}\label{eq:deCFP}
            \min_{\theta, P} \quad \tfrac{1}{2} \|P-P^\star\|^2_{M} \quad \text {s.t.} \quad  P_\ell \leq L \theta + P_{L}  \leq P_{u} 
            \end{align}
where $M \coloneqq \diag \{m_i\}_{i=1}^{n} \in \mathbb{S}_{>0}^{n}$ is a diagonal matrix of droop gains, $P^{\star}\coloneqq \left(P^\star_1, \ldots, P^\star_{n}\right) \in \mathbb{R}^n$  is a vector of power references periodically
prescribed by the system operator. Including additional constraints (e.g., line limits) is seen as interesting topic for future work. The following assumption formalizes that the converters are dispatched within their power limits. In practice this is ensured  by the system operator through, e.g., security constraint economic dispatch.
\begin{assumption}[\textbf{Feasible references}]\label{assum:setpoint}
    The setpoints $P^\star_i \in \mathbb{R}^n$ satisfy $P_{\ell,i} < P^\star_i < P_{u,i}$.
\end{assumption}

\begin{remark}[\textbf{Distributed algorithms}]\label{rem:distributed}
    Using common algorithms (e.g., primal-dual dynamics\cite{CMC2016,LZC2016}) to solve the CFP \eqref{eq:deCFP} results in \emph{distributed} dynamics that require information exchange between nodes. For example, using primal-dual dynamics~\cite{CMC2016,LZC2016}, the dual multiplier dynamics only depend on local measurements (i.e., power injection) but the primal dynamics (i.e., voltage phase angle) require dual multiplers from neighboring nodes~\cite[Sec.~II-C]{IG2025}. Instead, we seek a \emph{decentralized grid-forming}  controller that maps the converter power injection $P_i$ to its ac voltage frequency $\omega_i = \ddt \theta_i$ and, in contrast to~\cite{IG2025}, admits rigorous bounds on the convergence rate to inform controller tuning and clarify the impact of the network topology.
\end{remark}
Next, we show that feasibility of the CFP \eqref{eq:deCFP} is ensured by Assumption~\ref{assum:feas}.
\begin{proposition}[\textbf{Feasibility in nodal coordinates \cite{IG2025}}]\label{prop:feas}
    There exists $\theta \in \mathbb{R}^n$ such that $P_{\ell} < L\theta + P_{L} < P_{u}$ if and only if $P_\ell$, $P_u$, and $P_L$ satisfy Assumption~\ref{assum:feas}.
\end{proposition}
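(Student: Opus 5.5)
The key fact is that, since $\mathcal G$ is connected, the range of the Laplacian is $\Ima L = \{v \in \mathbb{R}^n \mid \mathbbl{1}_n^\mathsf{T} v = 0\}$. Because $L = BWB^\mathsf{T}$ is symmetric positive semidefinite with $\ker L = \vspan\{\mathbbl{1}_n\}$, we have $\Ima L = (\ker L)^\perp = \mathbbl{1}_n^\perp$. The existence of $\theta$ with $P_\ell < L\theta + P_L < P_u$ is therefore equivalent to the existence of a vector $v \in \mathbb{R}^n$ satisfying $\mathbbl{1}_n^\mathsf{T} v = 0$ and $P_\ell < v + P_L < P_u$. Substituting $P = v + P_L$, this becomes the existence of $P$ with $P_\ell < P < P_u$ componentwise and $\mathbbl{1}_n^\mathsf{T} P = \mathbbl{1}_n^\mathsf{T} P_L$.

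\emph{Necessity.} Suppose such a $\theta$ exists and set $P = L\theta + P_L$. For each $i$, $P_{\ell,i} < P_i < P_{u,i}$, which forces $P_{\ell,i} < P_{u,i}$. Summing the componentwise inequalities and using $\mathbbl{1}_n^\mathsf{T} L = \mathbbl{0}_n^\mathsf{T}$ gives $\sum_i P_{\ell,i} < \sum_i P_{L,i} < \sum_i P_{u,i}$. This is exactly Assumption~\ref{assum:feas}.

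\emph{Sufficiency.} Assume Assumption~\ref{assum:feas} holds. We construct $P$ explicitly as a convex combination of the limits. Define $\alpha \coloneqq (\sum_i P_{L,i} - \sum_i P_{\ell,i})/(\sum_i P_{u,i} - \sum_i P_{\ell,i})$. The strict inequalities in Assumption~\ref{assum:feas} guarantee that the denominator is positive and that $\alpha \in (0,1)$. Set $P \coloneqq P_\ell + \alpha (P_u - P_\ell)$. Since $P_{\ell,i} < P_{u,i}$ and $0 < \alpha < 1$, each component lies strictly between its limits. By the choice of $\alpha$, $\mathbbl{1}_n^\mathsf{T} P = \mathbbl{1}_n^\mathsf{T} P_L$. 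Hence $P - P_L \in \mathbbl{1}_n^\perp = \Ima L$, so there is a $\theta$ with $L\theta = P - P_L$; for instance $\theta = L^\dagger (P - P_L)$.

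The only nonroutine step is the range characterization $\Ima L = \mathbbl{1}_n^\perp$. It follows from connectivity of $\mathcal G$ and positivity of the weights: $x^\mathsf{T} L x = \sum_{k} w_k (B^\mathsf{T} x)_k^2$ vanishes only when $x$ is constant on every connected component, so a connected graph gives a one-dimensional kernel. Symmetry of $L$ then yields the range.
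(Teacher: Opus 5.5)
Your proof is correct and complete. Necessity follows by summing the strict componentwise bounds and using $\mathbbl{1}_n^\mathsf{T} L = \mathbbl{0}_n^\mathsf{T}$. Sufficiency follows from the explicit witness $P_\ell + \alpha(P_u - P_\ell)$ together with $\Ima L = \mathbbl{1}_n^\perp$ for connected $\mathcal{G}$.

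The paper does not prove this proposition itself; it imports it from \cite{IG2025}, so there is no competing argument to compare against. Your self-contained route is the natural one. Since $\Ima(BV) = \Ima(L)$, it also directly gives the strict feasibility in edge coordinates that the paper later invokes for Slater's condition.
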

    Finally, we define $\mathcal{V}_{\theta}$ as the set of KKT points of \eqref{eq:deCFP} (for details see \cite[Sec.~II-B]{IG2025}). 
\begin{definition}[\textbf{KKT points in nodal coordinates}]\label{def:stheta}
$\mathcal{V}_{\theta} \subseteq \mathbb{R}^{3n}$ denotes the points $(\theta^\star,\lambda^\star_\ell,\lambda^\star_u)$ that satisfy the KKT conditions of \eqref{eq:deCFP}, i.e., $P_\ell \leq L \theta^\star + P_L \leq P_u$, $(\lambda^\star_\ell,\lambda^\star_u) \in \mathbb{R}^{2n}_{\geq 0}$, and 
\begin{subequations}\label{eq:KKT:nodal}
\begin{align}
    M (L \theta^\star + P_L-P^\star) + (\lambda^\star_u-\lambda^\star_\ell) &\in \ker{B^\mathsf{T}},\label{eq:KKT:nodal:optimality}\\
    \diag\{\lambda^\star_{\ell,i}\}_{i=1}^n (P_\ell - L \theta^\star - P_L)&=\mathbbl{0}_n,\\
    \diag\{\lambda^\star_{u,i}\}_{i=1}^n (L \theta^\star + P_L - P_u)&=\mathbbl{0}_n.
\end{align}
\end{subequations}
\end{definition}

\subsection{Review of projection-based network dynamics}
Computing the optimal solution of \eqref{eq:deCFP} via its associated primal-dual dynamics~\cite{CMC2016} results in a distributed algorithm that requires exchanging dual-multipliers between nodes. This is not feasible on primary frequency control timescales in large-scale power systems. In contrast, the projection-based network dynamics depicted in Fig.~\ref{fig:projeciton-based}, that resemble but are distinct from the well-known primal-dual dynamics, solve \eqref{eq:deCFP} using only local information~\cite{IG2025}. We require the following definition of projection operator to formalize the projection-based and projection-free network dynamics.
%

\begin{definition}[\textbf{Projection}]\label{def:projection}
    Given a convex set $\mathcal{C} \subseteq \mathbb{R}^n$ and a vector $v \in \mathbb{R}^n$, $\Pi_{\mathcal{C}}(v)$ denotes the projection of $v$ with respect to the set $\mathcal{C}$, i.e., $\Pi_{\mathcal{C}}(v) = \argmin\nolimits_{p \in \mathcal{C}} \norm{p - v}$.
\end{definition}
%
\begin{figure}[!t]
    \includegraphics[width=\columnwidth]{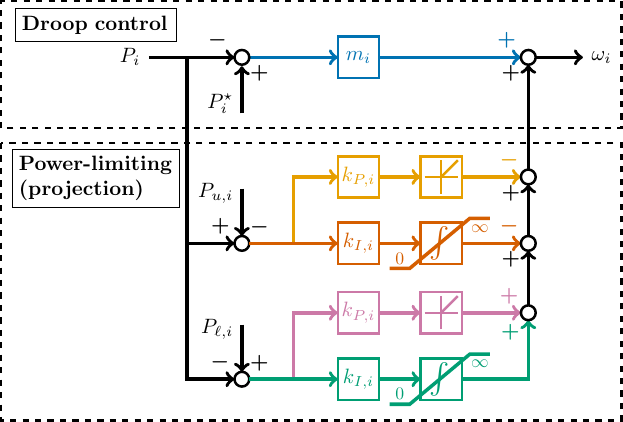}
    \caption{Projection-based power-limiting droop control}
    \label{fig:projeciton-based}
\end{figure}
Next, consider the projection-based power-limiting droop control with states $\ddt \theta_i = \omega_i$,  $\lambda_{u,i} \in \mathbb{R}_{\geq 0}$, and $\lambda_{\ell,i} \in \mathbb{R}_{\geq 0}$ that correspond to the ac voltage phase angles and integral of the upper and lower power limit violations (see Fig.~\ref{fig:projeciton-based}). Let 
\begin{align*}
    g(P_N) = \begin{bmatrix}
    g_\ell(P_N)\\g_u(P_N)
    \end{bmatrix} \coloneqq \begin{bmatrix}
    P_{\ell} - P_N - P_L\\
    P_N + P_{L}-P_u
    \end{bmatrix}.
\end{align*}
The interconnection of the nodal dynamics in Fig.~\ref{fig:projeciton-based} via \eqref{eq:dcpfeq} can be written as the projected dynamical system 
\begin{subequations}\label{eq:plimdroop}
    \begin{align}
        \ddt\theta =& M\left(P^\star-P_L-L \theta\right) - \left(\Xi \otimes K_I \right)\lambda \label{eq:plimdroop:theta}\\ 
            &-\left( \Xi \otimes K_P \right) \Pi_{{\mathbb{R}}^{2n}_{\geq 0} }\left(g(L \theta)\right), \nonumber \\
            \ddt\lambda =&  \Pi_{\mathcal{T}_{\lambda} \mathbb{R}^{2n}_{\geq 0}} \big(\left(I_2 \otimes K_{I}\right) g(L\theta)\big), \label{eq:plimdroop:lambda}
    \end{align}
\end{subequations}
where $\Xi \coloneqq (-1, 1)^\mathsf{T}$ and $\lambda \coloneqq  (\lambda_\ell,\lambda_u) \in \mathbb{R}^{2n}_{\geq 0}$ collects the integrator states. Moreover, the matrices $K_P \coloneqq \diag\{k_{P, i}\}_{i=1}^{n}$ and $K_I \coloneqq \diag\{\sqrt{k_{i}}\}_{i=1}^{n}$ collects proportional $k_{P, i} \in \mathbb{R}_{>0}$ and integral gains $k_{i} \in \mathbb{R}_{>0}$, respectively. We emphasize that this model assumes that the load $P_L$, power setpoints $P^\star$, and power limits $P_\ell$ and $P_u$ are constant on the time-scales of interest for studying frequency stability.

While the multi-converter network dynamics \eqref{eq:plimdroop} do not coincide with primal-dual dynamics of \eqref{eq:deCFP} in nodal coordinates, they coincide after transformation to edge coordinates. Notably, using $\eta = VB^\mathsf{T}\theta$ to transform the multi-converter network dynamics \eqref{eq:plimdroop} results in dynamics that coincide with the primal-dual dynamics of the CFP \eqref{eq:pfangledifference} in edge coordinates (see~\cite[Fig.~1]{IG2025}). Building upon this observation and the LaSalle function from \cite{CMC2016}, it can be shown that \eqref{eq:plimdroop} converges to optimal (i.e., KKT) points of \eqref{eq:deCFP}~\cite{IG2025}.

\section{Projection-free power limiting droop control \& summary of main results}\label{sec:projectionfree}
The discontinuity of the dual dynamics in the projection-based network dynamics \eqref{eq:plimdroop} precludes exponential convergence and significantly complicates bounding the rate of convergence to the KKT points of \eqref{eq:deCFP}.
To overcome the conceptual limitations of projection-based power-limiting droop control, we introduce the novel projection-free power-limiting droop control (see Fig.~\ref{fig:projeciton-free}) that are the main focus of this paper.

\subsection{Projection-free networked dynamics}\label{subsec:projectionfree}
\begin{figure}[!t]
    \includegraphics[width=\columnwidth]{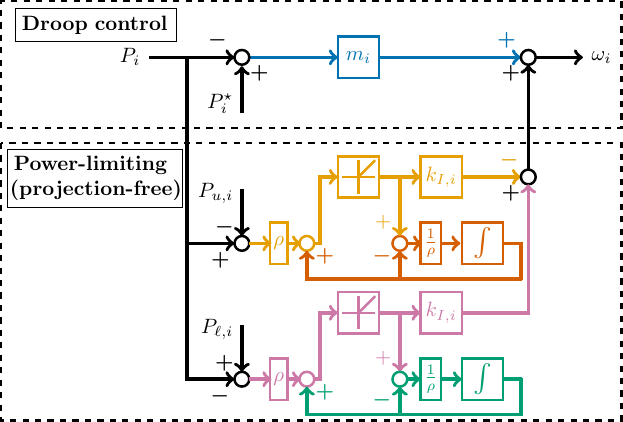}
    \caption{Projection-free power-limiting droop control}
    \label{fig:projeciton-free}
\end{figure}
The control input is the frequency $\omega_i$ of the voltage phase angle $\theta_i$ of each VSC, i.e., $\ddt \theta_i = \omega_i \in \mathbb{R}$. The controller equations are given by
\begin{subequations}\label{eq:newplimdroopsc}
    \begin{align}
    \!\!\!\!\!\!\!\!\!\omega_i =& m_i (P^\star_i-P_i)\! -\! k_i \Pi_{{\mathbb{R}}_{\geq 0}}(\rho(P_i-P_{u,i}) + \lambda_{u, i}) \\
     &+ k_i \Pi_{{\mathbb{R}}_{\geq 0} }(\rho(P_{\ell,i} - P_i) + \lambda_{\ell, i}) , \nonumber\\
    \!\!\!\!\!\!\!\!\!\ddt\lambda_{\ell,i} =&  \frac{1}{\rho}\big(\Pi_{\mathbb{R}_{\geq 0}} \left(\rho(P_{\ell,i} - P_i) + \lambda_{\ell, i}\right) - \lambda_{\ell, i}\big),\\
    \!\!\!\!\!\!\!\!\!\ddt\lambda_{u,i} =&  \frac{1}{\rho}\big(\Pi_{\mathbb{R}_{\geq 0}} \left(\rho(P_i - P_{u,i}) + \lambda_{u, i}\right) - \lambda_{u, i}\big),
    \end{align}
\end{subequations}
where $\lambda \coloneqq (\lambda_{\ell, i}, \lambda_{u, i}) \in \mathbb{R}^{2n}_{\geq 0}$ are the integrals of the violation of the lower and upper power limits, respectively. Moreover, $k_{i}$ is a controller gain and $\rho$ is the integrator time constant. The droop gain of each VSC is denoted by $m_{i}$ \cite{CDA1993}.
This controller is motivated by the projection-free primal-dual dynamics introduced in~\cite{QL2019,TQL2020}. The reminder of this manuscript analyzes the interconnection of the nodal dynamics \eqref{eq:newplimdroopsc} via \eqref{eq:dcpfeq} given by
\begin{subequations}\label{eq:vector_net_nodal_orig}
    \begin{align}
        \!\!\!\ddt \theta &\!=\! M(P^\star \!-\! L \theta \!-\! P_L) \!-\! (\Xi \otimes K_I) \Pi_{\mathbb{R}^{2n}_{\geq 0}}\!\left(\rho g(L\theta) \!+\! \lambda\right)\!,\!\\ 
        \!\!\rho \ddt \lambda &= \Pi_{\mathbb{R}^{2n}_{\geq 0}}(\rho g(L\theta)  + \lambda) - \lambda.
    \end{align}
\end{subequations}
We emphasize that the projection-free multi-converter frequency dynamics \eqref{eq:vector_net_nodal_orig} do not coincide with the projection-free primal-dual dynamics associated with the CFP \eqref{eq:deCFP} in nodal coordinates. However, applying the change of coordinates to edge coordinates, the projection-free multi-converter frequency dynamics coincide with the projection-free primal-dual dynamics of \eqref{eq:pfangledifference}. Thus, \cite[Theorem 1]{TQL2020} can be used to establish exponentially convergence of the  multi-converter frequency dynamics to KKT points of \eqref{eq:deCFP} and bound the convergence rate.

Before establishing stability of the network dynamics, we need to extend the definition of semi-global exponential stability \cite[Theorem 5.17]{S1999} to semi-global exponential stability with respect to a set.

\begin{definition} [\textbf{Semi-global exponential stability}]\label{def:semiglobal}
    Consider the dynamical system $\ddt z = f(z)$. The system is semi-globally
    exponentially stable with respect to  $\mathcal{Z}_e$, if for any $h \in \mathbb{R}_{>0}$, there exist $\mathcal{M}_{\beta} \in \mathbb{R}_{>0}$ and $\beta \in \mathbb{R}_{>0}$ such that for any initial point $z_{0}$ such that $\norm{z_{0}}_{\mathcal{Z}_e} \leq h$, the corresponding solution $\varphi_{z}(t, z_0)$ of the dynamical system satisfies
    \begin{align*}
        \norm{\varphi_z(t, z_0)}_{\mathcal{Z}_e} \leq \mathcal{M}_{\beta} \cdot e^{-\beta t} \norm{\varphi_z(0, z_0)}_{\mathcal{Z}_e}, \;  \forall t\in[0, \infty].
    \end{align*}
\end{definition}

Although Definition~\ref{def:semiglobal} does not place restrictions on $\mathcal{Z}_{e}$, in our analysis $\mathcal{Z}_{e}$ is a convex set of KKT conditions of, e.g., the CFP \eqref{eq:deCFP}.

\begin{theorem}[\textbf{Semi-global exponential stability of networked dynamics in nodal coordinates}]\label{thm:nodal_exp_stability}
    Consider $P_\ell$, $P_u$, $P_L$, and $P^\star$ such that Assumption~\ref{assum:feas} and Assumption~\ref{assum:setpoint} hold. For any connected graph $\mc G$, \eqref{eq:vector_net_nodal_orig} is semi-globally exponentially stable on $\mathbb{R}^n \times \mathbb{R}^{2n}_{\geq 0}$ with respect to the set  $\mathcal{V}_{\theta}$. Moreover, there exists $\omega_s \in \mathbb{R}$ such that $\lim_{t \rightarrow \infty} \omega_i(t) = \omega_{s}$ for all $i \in \mc N$.
\end{theorem}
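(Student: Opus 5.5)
We will prove Theorem~\ref{thm:nodal_exp_stability} by passing to edge coordinates, where \eqref{eq:vector_net_nodal_orig} becomes an instance of the projection-free primal-dual dynamics of~\cite{QL2019,TQL2020}, and then pulling the conclusion back to nodal coordinates.

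\textbf{Step 1: reduction to edge coordinates.} Let $T\in\mathbb{R}^{n\times(n-1)}$ be a matrix whose columns form an orthonormal basis of $\ker(\mathbbl{1}_n^\mathsf{T})$. Decompose $\theta = \tfrac{1}{n}\mathbbl{1}_n\mathbbl{1}_n^\mathsf{T}\theta + TT^\mathsf{T}\theta$, and introduce reduced edge-type coordinates $\eta \coloneqq V B^\mathsf{T}\theta$ for a suitable $V$ (e.g., $V = W^{1/2}$ restricted to a spanning tree, or $T^\mathsf{T}$ composed appropriately). Choose $V$ so that $L\theta = \tilde B\eta$ for some full-column-rank $\tilde B$. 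Since $L\mathbbl{1}_n = \mathbbl{0}_n$, the right-hand side of \eqref{eq:vector_net_nodal_orig} depends on $\theta$ only through $\eta$. Hence $(\eta,\lambda)$ obeys an autonomous system, and the average angle $\bar\theta = \tfrac1n\mathbbl{1}_n^\mathsf{T}\theta$ is driven by it. Multiplying the $\theta$-equation by $VB^\mathsf{T}$ and using $M$ diagonal positive definite, we will show that the $(\eta,\lambda)$ dynamics equal, up to a positive definite metric $G$ (built from $VB^\mathsf{T}$ and $M$), the projection-free primal-dual dynamics of the augmented Lagrangian
\[
\mathcal{L}_\rho(\eta,\lambda)=\tfrac12\|\tilde B\eta+P_L-P^\star\|_M^2+\tfrac{1}{2\rho}\big(\|\Pi_{\mathbb{R}^{2n}_{\geq0}}(\rho \tilde g(\eta)+\lambda)\|^2-\|\lambda\|^2\big)
\]
of the edge-coordinate CFP \eqref{eq:pfangledifference}. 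Here $\tilde g(\eta)=g(\tilde B\eta)$. This step requires careful bookkeeping of where $M$, $K_I$ and $k_i$ enter. To make the gradient structure exact, we may need to scale the constraint by $K_I$ (i.e., use the constraint $(I_2\otimes K_I)g\le 0$) so that the $\lambda$-update and the $\theta$-coupling match.

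\textbf{Step 2: invoke \cite[Theorem 1]{TQL2020}.} That result gives semi-global exponential stability provided three conditions hold:
\begin{itemize}[label={}]
\item \emph{strong convexity and smoothness of the cost in $\eta$:} $\tilde B^\mathsf{T}M\tilde B\succ0$, which holds because $\tilde B$ has full column rank for a connected graph;
\item \emph{affine constraints};
\item \emph{linear independence of active constraint gradients / Slater:} this follows from Proposition~\ref{prop:feas} together with Assumption~\ref{assum:feas}. Since $P_\ell<P_u$, at most one of the pair $(\ell_i,u_i)$ is active at each node. Active rows of $\tilde B$ are linearly independent as long as not all nodes are active, and Assumption~\ref{assum:feas} excludes the case where all nodes are active (the sum of injections would then equal $\sum P_{\ell,i}$ or $\sum P_{u,i}$, or a mixture that must be checked).
\end{itemize}
This is the step I expect to be the main obstacle. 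Mixed active sets with all nodes saturated must be ruled out using $\mathbbl{1}^\mathsf{T}(L\theta+P_L)=\mathbbl{1}^\mathsf{T}P_L$ together with the uniqueness and structure of the KKT multipliers. If linear independence fails, I would instead rely on the version of \cite{TQL2020} that only needs the KKT set to be convex and compact in $\lambda$, obtained from Slater's condition via Proposition~\ref{prop:feas}. The exponential estimate then holds in the $G$-weighted norm relative to the KKT set of the edge problem, and norm equivalence gives the estimate in the Euclidean norm.

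\textbf{Step 3: back to nodal coordinates and synchronization.} The KKT set of the edge problem corresponds to $\mathcal{V}_\theta$ modulo $\mathrm{span}(\mathbbl{1}_n)$. Indeed, \eqref{eq:KKT:nodal:optimality} with $\ker B^\mathsf{T}=\mathrm{span}(\mathbbl{1}_n)$ matches edge stationarity, and complementarity is identical. Since $\mathcal{V}_\theta$ is invariant under $\theta\mapsto\theta+c\mathbbl{1}_n$, we have $\|(\theta,\lambda)\|_{\mathcal{V}_\theta}\le c\|(\eta,\lambda)\|_{\mathcal{V}_\eta}$, and the reverse inequality also holds, with constants given by singular values of $VB^\mathsf{T}$ on $\mathrm{span}(\mathbbl{1}_n)^\perp$. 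This transfers the semi-global exponential bound to $\mathcal{V}_\theta$.

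Finally, $\omega=\ddt\theta$ is a Lipschitz function of $(\eta,\lambda)$, so it converges exponentially to its value $\omega^\star$ on the KKT set. At a KKT point, $\ddt\lambda=0$ gives $\Pi(\rho g+\lambda^\star)=\lambda^\star$. Hence $\omega^\star = M(P^\star-L\theta^\star-P_L)-(\Xi\otimes K_I)\lambda^\star$, which by \eqref{eq:KKT:nodal:optimality} (after the gain scaling of Step 1) lies in $\ker B^\mathsf{T}=\mathrm{span}(\mathbbl{1}_n)$. Therefore $\omega^\star=\omega_s\mathbbl{1}_n$.
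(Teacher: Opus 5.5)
Your route is the paper's: pass to edge coordinates, identify the augmented-Lagrangian primal-dual flow of \eqref{eq:pfangledifference}, invoke \cite[Theorem~1]{TQL2020}, and pull back to $\mathcal{V}_\theta$ using the singular values of $VB^\mathsf{T}$ on $\mathbbl{1}_n^\perp$ (the paper's $\underline{\kappa},\overline{\kappa}$). Your $(n-1)$-dimensional coordinates play the role of the paper's $\gamma_+$. The paper instead keeps $\eta=W^{1/2}B^\mathsf{T}\theta\in\mathbb{R}^e$, for which the flow is exactly $-\nabla_\eta\mathcal{L}$ with no metric, and splits off the constant null-space part $\gamma_0$ in the proof of Proposition~\ref{prop:nodal_exp_stability}. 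If you keep $\xi=T^\mathsf{T}\theta$, the metric is $G=(T^\mathsf{T}LT)^{-1}$, which does not involve $M$. You then still need the substitution $z=G^{-1/2}\xi$ to reach the unit-metric form treated in \cite{TQL2020}. Your synchronization argument is fine, and slightly more complete than the paper's, which only shows $VB^\mathsf{T}\omega\to\mathbbl{0}_e$.

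The step that does not close is LICQ. The paper does not derive it; LICQ enters through Assumption~\ref{assum:LICQ} in Proposition~\ref{prop:nodal_exp_stability}, although the theorem statement omits it. Trying to derive it is the right instinct, but the tools you name do not suffice. The identity $\mathbbl{1}_n^\mathsf{T}P=\mathbbl{1}_n^\mathsf{T}P_L$ is compatible with a mixed fully saturated pattern $\sum_{i\in\mc I_\ell}P_{\ell,i}+\sum_{i\in\mc I_u}P_{u,i}=\sum_{i}P_{L,i}$. Uniqueness of the multipliers is a consequence of LICQ, so appealing to it is circular.

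The missing ingredient is Assumption~\ref{assum:setpoint} combined with stationarity. By \eqref{eq:KKT:edge:optimality}, $m_i(P_i-P_i^\star)+\sqrt{k_i}(\mu^\star_{u,i}-\mu^\star_{\ell,i})=c$ with $c$ independent of $i$. At an upper-saturated node, complementary slackness gives $\mu^\star_{\ell,i}=0$, hence $c\geq m_i(P_{u,i}-P_i^\star)>0$. At a lower-saturated node, $c\leq m_j(P_{\ell,j}-P_j^\star)<0$. So $\mc I_u$ and $\mc I_\ell$ cannot both be nonempty; this is the dichotomy the paper quotes from \cite{IG2025} in the proof of Lemma~\ref{lem:mubound}. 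Assumption~\ref{assum:feas} then excludes $\mc I=\mc N$. Finally, the rows of $B$ indexed by a proper subset $\mc I\subsetneq\mc N$ are linearly independent, because any left-null vector lies in $\vspan(\mathbbl{1}_n)$ yet would have to vanish outside $\mc I$. This persists after scaling by $K_I$ and $V$ and restricting to $\Ima(VB^\mathsf{T})$. Hence LICQ holds at every KKT point, which removes the need for Assumption~\ref{assum:LICQ} in this theorem.

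Your fallback, by contrast, is not available. The paper invokes \cite[Theorem~1]{TQL2020} under LICQ, and the rate constant $\kappa=\lambda_{\min}(\mathcal{J}_{\mc I}\mathcal{J}_{\mc I}^\mathsf{T})$ in \eqref{eq:beta_bound} is zero without it. Slater's condition alone only bounds the multiplier set. That supports asymptotic (LaSalle-type) convergence as in \cite{CMC2016,IG2025}, but not the exponential estimate.
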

A proof is provided in the Appendix.

\subsection{Sensitivity of convergence rate bound}
The bound $\beta \in \mathbb{R}_{>0}$ on the convergence rate of the networked dynamics depends on the control gains and network parameters (see Sec.~\ref{sec:convrate} for detailed results). Thus, $\beta \in \mathbb{R}_{>0}$ can be used to inform controller tuning and analyze the impact of the network on control performance. To this end, the next proposition introduces a method to scale any given controller gains $\rho$ and $k_{i}$ to improve the convergence rate.

\begin{theorem}[\textbf{Sensitivity of the convergence rate to control gains}]\label{thm:controlgainconvergence}
    Consider controller gains $k_i \in \mathbb{R}_{> 0}$ and $\rho \in \mathbb{R}_{> 0}$. Moreover, let $k_{i}^\prime = s k_i$ and $\rho^\prime = \frac{1}{\sqrt{s}}\rho$. For all $s \in \mathbb{R}_{> 1}$, it holds that 
     $   \beta(k_i, \rho) < \beta(k^\prime_{i}, \rho^\prime)$.
    \end{theorem}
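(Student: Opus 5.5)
The plan is to reduce the claim to the explicit form of the rate bound $\beta(k_i,\rho)$. That form comes from applying \cite[Theorem 1]{TQL2020} to the edge-coordinate representation of \eqref{eq:vector_net_nodal_orig} used in the proof of Theorem~\ref{thm:nodal_exp_stability}. The next step is to track how each ingredient of that bound transforms under $k_i\mapsto sk_i$, $\rho\mapsto\rho/\sqrt{s}$.

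\textbf{Step 1: locate the gains in the primal-dual form.} In edge coordinates $\eta=VB^\mathsf{T}\theta$, the dynamics \eqref{eq:vector_net_nodal_orig} are the projection-free (augmented Lagrangian) primal-dual dynamics of the edge-coordinate CFP. Here the integral gains enter through $K_I=\diag\{\sqrt{k_i}\}$. I would rescale the dual variables as $\tilde\lambda=(I_2\otimes K_I)\lambda$, or equivalently view the constraints as $\tilde g=(I_2\otimes K_I)g$. This puts the system in the exact template of \cite{TQL2020}, with the following data:
\begin{itemize}[leftmargin=*]
\item a strongly convex and smooth cost whose constants $\mu,\ell$ depend only on $M$ and the network, not on $k_i$ or $\rho$;
\item constraint Jacobian blocks proportional to $K_I$ times the active rows of $L$ (equivalently $B$);
\item penalty parameter $\rho$.
\end{itemize}
The bound of \cite[Theorem 1]{TQL2020} is a minimum of a few explicit rational expressions. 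Its inputs are $\mu$, $\ell$, $\rho$, the smallest singular value $\kappa_1$ of the active-constraint Jacobian, and the largest singular value $\kappa_2$. After the rescaling, $\kappa_1$ and $\kappa_2$ scale like $\sqrt{k}$ times network quantities, namely Laplacian eigenvalues of the active-constraint graph.

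\textbf{Step 2: scaling behaviour.} Under $k_i'=sk_i$ and $\rho'=\rho/\sqrt{s}$ we have $\kappa_j'=\sqrt{s}\,\kappa_j$, $j\in\{1,2\}$. The natural products $\rho\kappa_j$, which control how strongly the augmentation acts along constraint directions, are therefore invariant. I would rewrite each term of the $\min$ defining $\beta$ in the variables $(\mu,\ell,\rho\kappa_1,\rho\kappa_2,\kappa_1,\kappa_2)$, or equivalently $(\ldots,\rho\kappa_j,1/\rho)$. The dependence on $s$ then sits only in factors such as $\kappa_1^2$ or $1/\rho$. These factors are multiplied by $s$ or $\sqrt{s}$ while everything else stays fixed. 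It then remains to check, term by term, that each expression is nondecreasing in these factors, and that the active term of the minimum is strictly increasing. This gives $\beta(k,\rho)<\beta(k',\rho')$ for $s>1$.

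\textbf{Main obstacle.} The hard part is the second half of Step 2. The terms of the bound in \cite{TQL2020} are not obviously monotone, since some involve ratios such as $\kappa_2/\kappa_1$ or combinations like $\ell+\rho\kappa_2^2$. The invariance of $\rho\kappa_j$ is what should neutralize these. I would first write $\beta$ in the scale-invariant variables and verify that ratios like $\kappa_2/\kappa_1$ (unchanged) and $\rho\kappa_2^2=(\rho\kappa_2)\kappa_2$ (which grows by $\sqrt{s}$) appear only in ways that help convergence. If some term is not monotone, the fallback is a different substitution: write the dual time scale explicitly, so that the $\lambda$-dynamics speed up by $\sqrt{s}$ while the primal–dual coupling also grows by $\sqrt{s}$. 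Then argue from the quadratic Lyapunov function of \cite{TQL2020}, whose decrease-rate matrix has the scale-invariant blocks fixed and the remaining blocks increased. From that matrix I would read off a strictly larger $\beta$.
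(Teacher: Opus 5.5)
Your strategy is the paper's: track how the ingredients of the rate bound move under $k_i\mapsto sk_i$, $\rho\mapsto\rho/\sqrt{s}$, using that $\rho\sqrt{k_i}$ is invariant while $\kappa'=s\kappa$, $\mathcal{F}'=\sqrt{s}\,\mathcal{F}$, $\mathcal{S}'=\sqrt{s}\,\mathcal{S}$. However, the step you defer as the ``main obstacle'' is the whole proof, and you leave it open. The bound is also not a minimum of explicit rational expressions. $\beta$ is any number satisfying the explicit cap \eqref{eq:beta_constant} together with the implicit cubic condition \eqref{eq:beta_cubic}. In \eqref{eq:beta_cubic} \emph{both} sides grow with $s$: the left through $\kappa$, the right through $\mathcal{F}^2$, $\kappa/4$, $\mathcal{M}$ and $1/\rho$. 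So no monotonicity can be read off term by term from its raw form. Your list of inputs also omits $\delta_{\min}$, $\nu_0$ and $\norm{\mu^\star}$, which enter both conditions and must be checked too.

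The missing step is a normalization.
\begin{itemize}
\item Divide \eqref{eq:beta_cubic} by $\kappa/4$ to get $\zeta_\beta\ge\zeta_1+\zeta_2+1$. Here $\zeta_\beta=\alpha/\beta-16\beta^2/\kappa$ increases with $\kappa$ at fixed $\beta$, since $\alpha,\gamma$ do not depend on the gains. The term $\zeta_1=4\mathcal{F}^2/\kappa$ is invariant.
\item Since $\rho\mathcal{F}$ is invariant, $\mathcal{M}'=\sqrt{s}\,\mathcal{M}$, holding $\nu_0$ and $\norm{\mu^\star}$ fixed as the paper does.
\item Expanding $\zeta_2$, the terms $\gamma\alpha/\kappa$, $(\gamma+\alpha)\mathcal{M}/\kappa$ and $\gamma/(\kappa\rho)$ strictly decrease. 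The terms $\mathcal{M}^2/\kappa$, $\mathcal{M}/(\kappa\rho)$ and $1/(\kappa\rho^2)$ are invariant.
\end{itemize}
Your invariant-variable idea makes this transparent. Fix $\rho^2\kappa$, $\rho\mathcal{F}$, $\rho\mathcal{S}$ and $\rho\mathcal{M}$; then the scaling is just $\rho\mapsto\rho/\sqrt{s}$. The normalized right-hand side is a polynomial in $\rho$ with nonnegative coefficients, and $\zeta_\beta$ is strictly decreasing in $\rho$. Hence decreasing $\rho$ loosens the condition at every fixed $\beta$. Because the left side of \eqref{eq:beta_cubic} is strictly decreasing in $\beta$, its admissible set is an interval $(0,r_1]$, and $r_1$ strictly increases. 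Finally, $\delta_{\min}$ depends on the gains only through $\rho\sqrt{k_i}$. It is therefore invariant, and the cap \eqref{eq:beta_constant} scales by exactly $\sqrt{s}$. Both constraints loosen, which gives $\beta(k_i,\rho)<\beta(k_i^\prime,\rho^\prime)$; your Lyapunov-level fallback is not needed.
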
    
A proof is provided in the Appendix. This result establishes that the convergence rate increases for increasing $s \in \mathbb{R}_{>1}$, i.e., increasing $k_i$ and decreasing $\rho$ according to Theorem~\ref{thm:nodal_exp_stability}.

Next, let $d_i$, $d_{\max} \coloneqq \max_{i \in \mathcal{N}} d_i$ and $w_{\max}$ denote the degree of node $i \in \mathcal{N}$, maximum node degree, and maximum edge weight, respectively. Under mild technical assumptions, it can be shown that adding an edge (i.e., transmission line) that increases the degree of nodes that do not have maximum node degree (i.e., $d_l < d_{\max}$) does not decrease the bound $\beta$ on the convergence rate. 

\begin{proposition}[\textbf{Sensitivity of the convergence rate to connectivity}]\label{prop:connectivity:simplified}
Consider a graph $\mc G$ and the corresponding convergence rate $\beta_{\mc G}$. Moreover, consider a graph $\mc G^\prime$ obtained by adding an edge with weight $w_{e+1} \leq w_{\max}$ between any two nodes $(i,j) \notin \mathcal{E}$ of $\mc G$ that satisfy $d_l < d_{\max}$, $l \in {i,j}$, and let $\beta_{\mc G^\prime}$  denote the corresponding convergence rate. Then, there exists $\rho \in \mathbb{R}_{>0}$ such that $\beta_{\mc G^\prime} \geq \beta_{\mc G}$.
\end{proposition}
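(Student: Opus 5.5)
The plan is to make the dependence of $\beta$ on the graph explicit and then compare the two graphs term by term. Following the route behind Theorem~\ref{thm:nodal_exp_stability}, apply $\eta = VB^\mathsf{T}\theta$ to map \eqref{eq:vector_net_nodal_orig} to the projection-free primal-dual dynamics of the CFP in edge coordinates. We may then invoke the explicit rate of \cite[Theorem 1]{TQL2020}. That rate is a function of four quantities:
\begin{itemize}[leftmargin=*]
\item a lower bound $\mu$ on the strong convexity of the edge-coordinate objective;
\item an upper bound $\ell$ on its smoothness;
\item the upper bound $\kappa_2$ on the squared singular values of the Jacobian of the active constraints;
\item the lower bound $\kappa_1$ on those squared singular values.
\end{itemize}
The parameter $\rho$ also enters this rate. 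The first step is therefore to write $\beta_{\mc G} = \beta(\mu,\ell,\kappa_1,\kappa_2,\rho)$, where $\kappa_1$ and $\kappa_2$ are expressed through the Laplacian of the subgraph induced by the nodes with active constraints. This is the link between the constraint Jacobian and the Laplacian mentioned in the introduction. Via Gershgorin, $\kappa_2$ is bounded by a quantity such as $2 d_{\max} w_{\max}$ (times $\max_i k_i$). $\kappa_1$ is bounded below by an algebraic-connectivity-type quantity of the active-node subgraph.

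The second step compares these quantities for $\mc G$ and $\mc G'$. The new edge is added between nodes with $d_l < d_{\max}$, and $w_{e+1} \leq w_{\max}$. Hence $d'_{\max} = d_{\max}$ and $w'_{\max}=w_{\max}$, so the Gershgorin-type upper bounds ($\ell$ and $\kappa_2$) are unchanged. On the other side, $L' = L + w_{e+1} b b^\mathsf{T} \succeq L$ with $b = e_i - e_j$. By Weyl's inequality (interlacing), every eigenvalue of the Laplacian does not decrease. The same holds for any principal or Schur-complement-type submatrix used to define $\kappa_1$ and $\mu$, because adding a PSD rank-one term preserves the Loewner order on the relevant restriction. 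Thus $\kappa_1' \geq \kappa_1$ and $\mu' \geq \mu$.

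The third step is a monotonicity argument. I would show that, for a suitable choice of $\rho$, the expression $\beta(\mu,\ell,\kappa_1,\kappa_2,\rho)$ from \cite{TQL2020} is nondecreasing in $\mu$ and $\kappa_1$ with $\ell$ and $\kappa_2$ fixed. This is why the statement only asserts existence of $\rho$. The rate in \cite{TQL2020} is typically a minimum of terms such as $\mu/\ell$-type ratios and $\rho\kappa_1/(1+\rho\kappa_2)$-type ratios, together with a term decreasing in $\rho$. Choosing $\rho$ small enough makes the relevant branch of the minimum monotone increasing in $\kappa_1$ and $\mu$, which gives $\beta_{\mc G'} \geq \beta_{\mc G}$.

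I expect the main obstacle to be the second step. The edge-coordinate variables change dimension when an edge is added, since $e \to e+1$ and $V$ and $B$ change. So the comparison must be carried out on a coordinate-free quantity, namely the nodal Laplacian restricted to the complement of $\ker B^\mathsf{T}$. One must also confirm that the active constraint set at the optimizer, which determines $\kappa_1$, is handled uniformly. I would do this by using a bound that holds over all possible active sets: a worst case over subsets of nodes of grounded-Laplacian minimum eigenvalues. Each such eigenvalue is monotone under $L \mapsto L + w_{e+1} b b^\mathsf{T}$, so the worst case is monotone as well.
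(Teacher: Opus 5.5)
Your skeleton is the paper's. Proposition~\ref{prop:connectivity} also tracks which constants move when the edge is added: $\alpha=m_{\min}\lambda_{\min}^{+}(L)$ can only grow, while $\gamma$ and $\mathcal{S}$ stay fixed because $d_{\max}$ and $w_{\max}$ do. It then restricts $\rho$ so that the bound is monotone in $\alpha$. Your Loewner-order argument that $\kappa=\lambda_{\min}(K_{I_{\mathcal{I}}}L_{\mathcal{I}}K_{I_{\mathcal{I}}})$ cannot decrease (with $L_{\mathcal{I}}$ the principal submatrix of $L$ on $\mathcal{I}$) is correct and a bit sharper than what the paper uses.

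However, your third step only guesses the shape of the rate. The bound is the largest $\beta$ satisfying \eqref{eq:beta_constant}--\eqref{eq:beta_cubic}, and $\alpha$ enters \eqref{eq:beta_cubic} on both sides, via $\kappa\alpha/(4\beta)$ and via $(\gamma+\mathcal{M})\alpha$. Its $\alpha$-derivative is therefore $\kappa/(4\beta)-(\gamma+\mathcal{M})$, so monotonicity requires $\beta\leq\kappa/(4(\gamma+\mathcal{M}))$. Likewise $\kappa$ appears on both sides, and monotonicity in $\kappa$ requires $\beta\leq\alpha$. Enforcing the first condition is the actual content of the paper's restriction on $\rho$, and it uses the $\rho$- and $L$-independent bound on $\norm{\mu^\star}$ from Lemma~\ref{lem:mubound} to control $\mathcal{M}$.

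Your worry about the active set is unnecessary. For every connected graph, $\{L\theta+P_L:\theta\in\mathbb{R}^n\}=\{P:\sum_i P_i=\sum_i P_{L,i}\}$. Hence the optimal injections, the active set, $\omega_s$ and $\mu^\star$ do not depend on the topology. You can keep the exact $\kappa$ instead of a worst case over subsets, which would change the bound being compared.

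The more serious gap is in your second step. The bound also contains the Frobenius constant $\mathcal{F}=\sqrt{2d_{\max}k_{\max}w_\Sigma}$ of Lemma~\ref{prop:MgLg}, which is missing from your list. It enters \eqref{eq:beta_constant}, \eqref{eq:beta_cubic} and $\mathcal{M}$, always shrinking the admissible $\beta$. It also strictly grows when an edge is added, since $w_\Sigma$ increases by $w_{e+1}$. So ``the upper bounds are unchanged'' is false, and weak monotonicity in $\alpha$ and $\kappa$ cannot give $\beta_{\mc G'}\geq\beta_{\mc G}$.

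Neither $\alpha$ nor $\kappa$ must increase strictly. Adding $(i,j)$ leaves $\lambda_{\min}^{+}(L)$ unchanged if and only if some associated eigenvector $v$ has $v_i=v_j$, which is automatic when that eigenvalue is multiple. It leaves $\kappa$ unchanged when $i,j\notin\mathcal{I}$. For example, take a unit-weight star with at least four leaves and join two leaves that are not at a limit. Then $\alpha,\gamma,\mathcal{S},\kappa,\delta_{\min}$ and $\norm{\mu^\star}$ are all unchanged while $\mathcal{F}$ grows, so $\beta_{\mc G'}<\beta_{\mc G}$ for every $\rho$.

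The paper's proof of Proposition~\ref{prop:connectivity} has the same blind spot. It treats $\mathcal{F}$ as fixed, e.g.\ when asserting $\beta_{\mc G}=\beta_{\mc G'}$ if \eqref{eq:beta_constant} binds. Following it will not close the gap.

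Either of two repairs works with your approach:
\begin{enumerate}
\item Assume $\lambda_{\min}^{+}(L)$ or $\kappa$ increases strictly. For small $\rho$, \eqref{eq:beta_cubic} binds with $\beta\approx\kappa\alpha\rho^2/2$, so the gain is of order $\rho^2$ while the loss from $\mathcal{F}$ is of order $\rho^4$.
\item Replace $\mathcal{F}$ by the topology-independent bound $\sqrt{n-1}\,\mathcal{S}$, which is valid because $K_IBV$ has rank at most $n-1$. Your weak-monotonicity argument, with the third step done as above, then goes through.
\end{enumerate}
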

The result directly follows from Proposition~\ref{prop:connectivity} (see Sec.~\ref{subsec:networktop}) and, broadly speaking, is in line with previous results in grid-forming control that suggest that increasing the connecitivity between the most strongly coupled nodes can be deterimental (i.e., result in instability).

\section{stability analysis}\label{sec:stability_analysis}

As discussed in Remark~\ref{rem:distributed}, applying primal-dual dynamics to solve the CFP~\eqref{eq:deCFP} in nodal coordinates, results in dynamics that cannot be implemented using decentralized converter controls. However, transforming the CFP to edge coordinates~\cite{ZM2011} and applying primal-dual dynamics in edge coordinates results in dynamics that coincide with the projection-free networked dynamics \eqref{eq:newplimdroopsc} that arise from using decentralized projection-free power-limiting droop control (see Fig.~\ref{fig:projeciton-free}).

\subsection{Network and converter model in edge coordinates}
Using the oriented incidence matrix $B$ and $V \coloneq W^\frac{1}{2} \in \mathbb{R}^{n \times n}$, we define the change of coordinates $\eta = VB^\mathsf{T}\theta$ to  so-called edge coordinates~\cite{ZM2011}. The power injection \eqref{eq:dcpfeq} in edge coordinates follows $P \coloneq BV\eta + P_{L}$. Next, consider the constrained flow problem in edge coordinates 
\begin{subequations}\label{eq:pfangledifference}
\begin{align}
    &\min_\eta \quad \tfrac{1}{2} \norm{ BV\eta}^2_{M}   +\left(P_L-P^\star\right)^{\mathsf{T}} M BV\eta 
    \label{eq:pfangledifference:objective} \\ 
    & \text{s.t.} \quad K_I P_{\ell} \leq K_I (BV\eta+P_L)  \leq K_I P_u. 
    \label{eq:pfangledifference:constraints}
\end{align} 
\end{subequations}
\begin{assumption}\label{assum:LICQ}
    The linear independence constraint qualification (LICQ) holds at any optimizer $\eta^\star$ of \eqref{eq:pfangledifference}.
\end{assumption}

We highlight that Assumption~\ref{assum:LICQ} is only used to establish stability and convergence properties of the primal-dual dynamics associated with \eqref{eq:pfangledifference} but is not needed to relate the primal-dual dynamics associated with \eqref{eq:pfangledifference} to the projection-free network dynamics presented in Sec.~\ref{subsec:projectionfree}.

Finally, we define $\mathcal{V}_{\eta}$ as the set of KKT points of \eqref{eq:pfangledifference} (for details see \cite[Sec.~III-B]{IG2025}).
\begin{definition}[\textbf{KKT points of CFP in edge coordinates}]\label{def:seta}
$\mathcal{V}_{\eta} \subseteq \mathbb{R}^{e + 2n}$ denotes the set of points $(\eta^\star,\lambda^\star_\ell,\lambda^\star_u)$ that satisfy the KKT conditions of the CFP in edge coordinates \eqref{eq:pfangledifference}, i.e., $P_\ell \leq BV \eta^\star + P_L \leq P_u$,  $(\mu^\star_\ell,\mu^\star_u) \in \mathbb{R}^{2n}_{\geq 0}$, and 
\begin{subequations}\label{eq:KKT:edge} 
\begin{align} 
    M (BV \eta^\star + P_L - P^\star) + K_{I}(\mu^\star_u-\mu^\star_\ell) &\in \ker(B^\mathsf{T}),\label{eq:KKT:edge:optimality}\\
    \diag\{\mu^\star_{u,i}\}_{i=1}^n K_{I} (BV \eta^\star + P_L - P_u)&=\mathbbl{0}_n,\\
    \diag\{\mu^\star_{\ell,i}\}_{i=1}^n K_{I} (P_\ell - BV \eta^\star - P_L)&=\mathbbl{0}_n.
\end{align}
\end{subequations}
\end{definition}
\subsection{Stability in edge coordinates}
The Augmented Lagrangian introduced in \cite{B1982} associated with the CFP in edge coordinates \eqref{eq:pfangledifference} is given by 
\begin{align}\label{eq:auglagrangian}
    \mathcal{L}(\eta, \mu_{\ell}, \mu_{u}) = \tfrac{1}{2} \norm{ BV\eta}^2_{M}   +\left(P_L-P^\star\right)^{\mathsf{T}} M BV\eta \nonumber\\
    + \sum_{j=1}^{m} \mathcal{H}_{\rho}(-a_j^\mathsf{T} \eta - b_j, \mu_{j, \ell}) +  \mathcal{H}_{\rho}(a_j^\mathsf{T} \eta - c_j, \mu_{j, u})
\end{align}
and the penalty function
\begin{subequations}
    \begin{align*}
        &\mathcal{H}_{\rho}(a_j^\mathsf{T}\eta - b_j, \mu_j) \\
        &= \begin{cases*}
            \makebox[12.5em][l]{$(a_j^\mathsf{T}\eta - b_j)\mu_j + \frac{\rho}{2}(a_j^\mathsf{T}\eta - b_j)^2$}
            \text{if } \rho(a_j^\mathsf{T}\eta - b_j) + \mu_j \geq 0 \\
            \makebox[12.5em][l]{$-\frac{1}{2} \frac{\mu_j^2}{\rho}$}
            \text{if } \rho(a_j^\mathsf{T}\eta - b_j) + \mu_j < 0
        \end{cases*}.
    \end{align*}
\end{subequations}
Here, $a_j$ is the $j$-th column of $(K_{I}BV)^\mathsf{T}$, $b_j$ is the $j$-th element of $K_I(P_L - P_{\ell})$, $c_j$ is the $j$-th element of $K_I(P_u - P_L)$, and  $\mu_{\ell} \in \mathbb{R}_{\geq 0}$ and $\mu_{u} \in \mathbb{R}_{\geq 0}$ are the dual multipliers associated with the lower and upper inequalities respectively.  

Note that $\sum_{j=1}^{n} x_j \Pi_{\mathbb{R}_{\geq 0}}(y_j) = X^\mathsf{T} \Pi_{\mathbb{R}^n_{\geq 0}}(y)$, where $x_j$ is the $j$-th column of matrix $X$ and $y_j$ is the $j$-th element of the column vector $y$. Then, primal-dual dynamics associated with \eqref{eq:auglagrangian} are given by $\mu = (\mu_\ell,\mu_u)$ and
\begin{subequations}\label{eq:primaldualedge}
    \begin{align}
        \ddt \eta =& VB^\mathsf{T}\Big(\! M(P^\star - P_{L} - BV\eta) \label{eq:ddteta} \\
        &- (\Xi \otimes K_I) \Pi_{\mathbb{R}^{2n}_{\geq 0}}\big(\rho (I_2 \otimes K_I) g(BV\eta) \!+ \!\mu\big)\Big) \nonumber \\
        \rho  \ddt \mu =& \Pi_{\mathbb{R}^{2n}_{\geq 0}}\big(\rho (I_2 \otimes K_I) g(BV\eta) + \mu\big) - \mu.
    \end{align}
\end{subequations}


The next theorem establishes the projection-free networked dynamics \eqref{eq:vector_net_nodal_orig} and projection-free primal-dual dynamics \eqref{eq:primaldualedge} coincide in edge coordinates. To this end, let $T_\eta \coloneqq \blkdiag(VB^\mathsf{T}, K_I)$.
\begin{proposition}[\textbf{Coinciding dynamics}]\label{prop:equivalency}
    Consider any initial condition $(\theta_0,\lambda_0)$ of the projection-free networked dynamics \eqref{eq:vector_net_nodal_orig} in nodal coordinates. Moreover, consider the corresponding initial condition $(\eta_0,\mu_0)=T_\eta (\theta_0, \lambda_0)$ of the projection-free primal-dual dynamics \eqref{eq:primaldualedge} of \eqref{eq:auglagrangian} in edge coordinates. Then, the solutions $\varphi_{\theta}(t,(\theta_0,\lambda_0))$ and $\varphi_{\eta}(t,(\eta_0, \mu_0))$ of \eqref{eq:vector_net_nodal_orig} and \eqref{eq:primaldualedge} satisfy $T_\eta \varphi_{\theta}(t,(\theta_0,\lambda_0)) = \varphi_{\eta}(t,T_\eta(\theta_0, \lambda_0))$.
\end{proposition}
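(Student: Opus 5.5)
The plan is to push the nodal solution through the linear map $T_\eta$ and check that the image is a Carath\'eodory solution of \eqref{eq:primaldualedge} starting from $T_\eta(\theta_0,\lambda_0)$. The identity $T_\eta \varphi_{\theta} = \varphi_{\eta}$ then follows from uniqueness of solutions of \eqref{eq:primaldualedge}. Concretely, fix a solution $(\theta(t),\lambda(t))$ of \eqref{eq:vector_net_nodal_orig} and define $\eta(t) \coloneqq VB^\mathsf{T}\theta(t)$ and $\mu(t) \coloneqq (I_2\otimes K_I)\lambda(t)$. Here $K_I$ acts blockwise on $\lambda=(\lambda_\ell,\lambda_u)$, which is how I read $T_\eta=\blkdiag(VB^\mathsf{T},K_I)$. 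Since this map is linear and time-invariant, $(\eta,\mu)$ is absolutely continuous whenever $(\theta,\lambda)$ is. Its derivative is obtained by applying $VB^\mathsf{T}$ and $I_2\otimes K_I$ to the nodal vector field.

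The first key step is the identity $BV\eta = BVVB^\mathsf{T}\theta = BWB^\mathsf{T}\theta = L\theta$. Consequently $g(BV\eta)=g(L\theta)$, and the droop term $M(P^\star-P_L-BV\eta)$ equals $M(P^\star-P_L-L\theta)$. The second key step is that projection onto the nonnegative orthant acts componentwise and commutes with positive diagonal scaling: for any diagonal $D\succ 0$, $\Pi_{\mathbb{R}^{2n}_{\geq 0}}(Dv)=D\,\Pi_{\mathbb{R}^{2n}_{\geq 0}}(v)$, because $\max(0,d x)=d\max(0,x)$ for $d>0$. Applying this with $D=I_2\otimes K_I$ gives
\begin{align*}
\Pi_{\mathbb{R}^{2n}_{\geq 0}}\big(\rho(I_2\otimes K_I)g(BV\eta)+\mu\big) = (I_2\otimes K_I)\,\Pi_{\mathbb{R}^{2n}_{\geq 0}}\big(\rho g(L\theta)+\lambda\big).
\end{align*}
Multiplying the nodal multiplier equation by $I_2\otimes K_I$ then reproduces $\rho\ddt\mu = \Pi_{\mathbb{R}^{2n}_{\geq 0}}(\cdot)-\mu$ directly.

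For the primal part, I would multiply the nodal angle equation by $VB^\mathsf{T}$ and compare the feedback term with \eqref{eq:ddteta}. There one gets $(\Xi\otimes K_I)(I_2\otimes K_I)\Pi(\rho g+\lambda) = (\Xi\otimes K_I^2)\Pi(\rho g+\lambda)$. Since $K_I^2=\diag\{k_i\}$, this is exactly the gain $k_i$ appearing in the decentralized controller \eqref{eq:newplimdroopsc}. This bookkeeping of the gain is where I expect the main care to be needed. The vector form \eqref{eq:vector_net_nodal_orig} should be read with the per-node gain $k_i$ of \eqref{eq:newplimdroopsc}, that is, $\Xi\otimes K_I^2$, and I would state this consistency explicitly. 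Once it is fixed, the right-hand sides agree pointwise almost everywhere, so $(\eta,\mu)$ is a solution of \eqref{eq:primaldualedge} with initial condition $T_\eta(\theta_0,\lambda_0)$.

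Finally, I would establish uniqueness. The map $v\mapsto \Pi_{\mathbb{R}^{2n}_{\geq 0}}(v)$ is globally Lipschitz with constant one. Hence the right-hand side of \eqref{eq:primaldualedge} is globally Lipschitz, and its Carath\'eodory solutions are classical, exist globally, and are unique; the same holds for \eqref{eq:vector_net_nodal_orig}. Uniqueness then yields $T_\eta \varphi_{\theta}(t,(\theta_0,\lambda_0)) = \varphi_{\eta}(t,T_\eta(\theta_0,\lambda_0))$ for all $t\geq 0$.
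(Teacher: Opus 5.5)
Your proposal is correct and follows essentially the paper's proof. The paper likewise substitutes $\mu=(I_2\otimes K_I)\lambda$, using that projection onto the nonnegative orthant commutes with positive diagonal scaling (cited there as a lemma from prior work), together with $\eta=VB^\mathsf{T}\theta$ and $BV\eta=L\theta$ to make the two vector fields coincide. Your explicit uniqueness argument via global Lipschitz continuity is left implicit in the paper, as is your reading of the nodal feedback gain as $\Xi\otimes K_I^2$ (i.e., $k_i$ as in \eqref{eq:newplimdroopsc}); both are consistent with, and indeed required by, the paper's rewriting.
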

A proof is provided in the Appendix.

\begin{proposition}[\textbf{Semi-global exponential stability of primal-dual dynamics in edge coordinates}]\label{prop:nodal_exp_stability}
    Consider $P_\ell$, $P_u$, $P_L$, and $P^\star$ such that Assumption~\ref{assum:feas} and Assumption~\ref{assum:setpoint} hold. Then, under Assumption~\ref{assum:LICQ}, the primal-dual dynamics \eqref{eq:primaldualedge} are semi-globally exponentially stable with respect to $\mathcal{V}_{\eta}$ on $\mathbb{R}^{e} \times \mathbb{R}^{2n}_{\geq0}$. Moreover, $\ddt (\eta,\lambda) = \mathbbl{0}_{e+2n}$ holds on $\mathcal{V}_{\eta}$.
\end{proposition}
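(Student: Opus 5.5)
The plan is to recognize \eqref{eq:primaldualedge} as exactly the projection-free primal-dual dynamics of \cite{QL2019,TQL2020} for the convex problem \eqref{eq:pfangledifference}, and then invoke \cite[Theorem~1]{TQL2020}. That result gives semi-global exponential stability with respect to the KKT set, under four conditions on the problem: a convex, smooth objective with Lipschitz gradient; affine inequality constraints; existence of a KKT point; and LICQ at optimizers.

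First, we verify the structural hypotheses. The objective \eqref{eq:pfangledifference:objective} is the convex quadratic $\tfrac12\eta^\mathsf{T}VB^\mathsf{T}MBV\eta$ plus a linear term, so its gradient is Lipschitz with constant $\lambda_{\max}(VB^\mathsf{T}MBV)$. The constraints are affine, $\pm a_j^\mathsf{T}\eta$ compared with $b_j,c_j$.

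Second, we verify that a KKT point exists.
\begin{itemize}
\item By Assumption~\ref{assum:feas} and Proposition~\ref{prop:feas}, there is $\theta$ with $P_\ell< L\theta+P_L<P_u$.
\item Setting $\eta=VB^\mathsf{T}\theta$ gives $BV\eta=L\theta$, so \eqref{eq:pfangledifference:constraints} is strictly feasible and Slater's condition holds.
\item The objective depends on $\eta$ only through $BV\eta$. Since $\mathcal G$ is connected, $\ker(BV)$ is the cycle space, and the objective is coercive on its orthogonal complement.
\item Hence a minimizer exists. Slater's condition then guarantees multipliers, so $\mathcal V_\eta\neq\emptyset$.
\end{itemize}
Assumption~\ref{assum:setpoint} is not needed for existence. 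It can be noted as the condition that rules out the trivial case in which all constraints are inactive, or it simply carries over from the statement.

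Third, we identify the vector field. The gradient of the augmented Lagrangian \eqref{eq:auglagrangian} is the key computation:
\begin{itemize}
\item $\partial_\mu\mathcal H_\rho(z,\mu)=\tfrac1\rho(\Pi_{\mathbb R_{\ge0}}(\rho z+\mu)-\mu)$;
\item $\partial_z\mathcal H_\rho(z,\mu)=\Pi_{\mathbb R_{\ge0}}(\rho z+\mu)$.
\end{itemize}
Stacking these with the identity $\sum_j x_j\Pi(y_j)=X^\mathsf{T}\Pi(y)$ and $a_j=(K_IBV)^\mathsf{T}e_j$ shows that $\ddt\eta=-\nabla_\eta\mathcal L$ and $\ddt\mu=\nabla_\mu\mathcal L$ coincide with \eqref{eq:primaldualedge}. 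The scaling $\rho\,\ddt\mu$ matches the time-constant convention of \cite{TQL2020}; if it differs, it only rescales time and therefore only the rate constant.

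Fourth, we apply the theorem. With LICQ supplied by Assumption~\ref{assum:LICQ}, \cite[Theorem~1]{TQL2020} yields, for each $h$, constants $\mathcal M_\beta,\beta$ as in Definition~\ref{def:semiglobal}, with respect to the set of saddle points of $\mathcal L$. It remains to match sets and prove the equilibrium claim:
\begin{itemize}
\item The saddle points of $\mathcal L$ are exactly the KKT points. At an equilibrium $\Pi(\rho z+\mu)=\mu$ with $\mu\ge0$, which encodes $z\le0$, $\mu\ge0$ and $\mu^\mathsf{T}z=0$ componentwise.
\item Substituting this into the $\eta$-equation gives $VB^\mathsf{T}\big(M(BV\eta+P_L-P^\star)+K_I(\mu_u-\mu_\ell)\big)=0$. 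This is equivalent to \eqref{eq:KKT:edge:optimality}, since $\ker(VB^\mathsf{T})=\ker(B^\mathsf{T})$ because $V$ is invertible.
\item Conversely, on $\mathcal V_\eta$ complementarity gives $\Pi(\rho g+\mu)=\mu$, so $\ddt\mu=0$, and the optimality condition gives $\ddt\eta=0$.
\end{itemize}

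The main obstacle is the mismatch between the hypotheses of \cite{TQL2020} and this problem. That result is typically stated for strongly convex objectives, or for a unique primal optimizer with LICQ. Here the objective is only convex, since $\ker(BV)\neq\{0\}$ whenever $\mathcal G$ has cycles, so $\mathcal V_\eta$ may be an affine set rather than a point. I would handle this by decomposing $\eta=\eta_\parallel+\eta_\perp$ with $\eta_\parallel\in\ker(BV)$. Because $\ddt\eta\in\Ima(VB^\mathsf{T})=\ker(BV)^\perp$, the component $\eta_\parallel$ is constant along solutions. The dynamics of $(\eta_\perp,\mu)$ are then the primal-dual dynamics of the restricted problem, whose objective is strongly convex on $\ker(BV)^\perp$ by connectivity, so \cite[Theorem~1]{TQL2020} applies there. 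The distance to $\mathcal V_\eta$ is then controlled by the distance to the restricted KKT set.
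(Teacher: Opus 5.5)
Your proposal is correct and takes essentially the same route as the paper: you invoke \cite[Theorem~1]{TQL2020} using Slater's condition and Assumption~\ref{assum:LICQ}, and you handle the lack of strong convexity on graphs with cycles by splitting $\eta$ into a component in $\ker(BV)$, which stays constant along solutions, and a component in $\ker(BV)^\perp$, on which the restricted problem is strongly convex. The paper realizes this split through the eigenvectors $\Gamma_+$, $\Gamma_0$ of $VB^\mathsf{T}MBV$, which span exactly $\ker(BV)^\perp$ and $\ker(BV)$, and it treats the tree case separately; your explicit gradient computation and your matching of equilibria with KKT points fill in steps the paper leaves implicit.
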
 
A proof is provided in the Appendix. We emphasize that the frequency and active constraints upon convergence for both projection-based and projection-free power limiting droop control are identical and solely determined by the KKT points of \eqref{eq:deCFP}. In particular, synchronous frequency $\omega_s$ can be explicitly expressed as a function of the active sets $\mc I_u$ and $\mc I_\ell$, total load $\sum_{i \in \mc N}  P_{L,i}$, total power dispatch $\sum_{i \in \mc N} P^\star_i$, and droop coefficients~\cite[Th.~3]{IG2025}.

We emphasize that global asymptotic stability of the projection-free networked dynamics \eqref{eq:newplimdroopsc} in nodal coordinates can be established by applying arguments from~\cite{TQL2020,IG2025} to the primal-dual dynamics associated with the augmented Lagrangian~\eqref{eq:auglagrangian}. However, since the gradient $K_I BV$ of the constraints~\eqref{eq:pfangledifference:constraints} with respect to $\eta$ is not full row rank, global exponential stability can generally not be achieved~\cite{TQL2020,DH2019}. Thus, we focus on establishing semi-global exponential stability of the projection-free networked dynamics. We  begin by introducing several constants needed for our analysis.
\subsection{Graph of nodes with active constraints}
Using the following definitions, we introduce the graph of converters with active. Broadly speaking, inverters operating at the limit are not providing frequency control and hence stability and convergence crucially hinges on the number and location of converters that are limited. Intuitively, the frequency dynamics cannot be stable cannot be stable if all converters are operating at their limit.
\begin{definition}[\textbf{Active constraint sets}] \label{def:active_set}
    We define $\mathcal{I}_{\ell} \subseteq \mc N$ and $\mathcal{I}_{u} \subseteq \mc N \setminus \mathcal{I}_{\ell}$ as the set of nodes at their lower and upper limit, i.e., $i \in \mathcal{I}_{\ell}$ if and only if $P_i = P_{\ell, i}$ and $i \in \mathcal{I}_{u}$ if and only if $P_i = P_{u, i}$. Moreover, we define $\mathcal{I} \coloneqq \mathcal{I}_{\ell} \cup \mathcal{I}_u$.
\end{definition}
    \begin{definition}[\textbf{Graph of nodes with active constraint}]\label{def:active_incidence}
        Let $B_{\mathcal{I}}$ contain the rows of $B$ associated with the active set $\mathcal{I}$. Moreover, we define the graph $\mathcal{G}_{\mathcal{I}}$ containing nodes $i \in \mathcal{I}$ and edges $(i,j) \in (\mathcal{I} \times \mathcal{I}) \cap \mc E$. Additionally, edges connecting nodes $i \in \mathcal{I}$ and nodes $j  \notin \mathcal{I}$ appear as self-loops in $\mathcal{G}_{\mathcal{I}}$.
    \end{definition}
    By extracting the weights corresponding to active nodes as $W_{\mathcal{I}}$, we  define the Laplacian matrix associated with $\mathcal{G}_{\mathcal{I}}$ as $L_{\mathcal{I}} = B_\mathcal{I} W_{\mathcal{I}} B_{\mathcal{I}}^\mathsf{T},$ where removing the rows of $B$ corresponding to inactive nodes results in $B_{\mathcal{I}}$ with (i) columns of zeros associated with edges of the inactive part of the graph, (ii) columns containing only $1$ or $-1$ associated with edges connecting the inactive part of the graph to that of the active part and (iii) columns containing $1$ and $-1$ associated with edges of the active part of the graph. It should be noted that the edges connecting the inactive to the active part of the graph represent themselves as self-loops in $L_{\mathcal{I}}$ (see Fig.~\ref{fig:graph}).

    \begin{figure}[!t]
    \vspace*{-1em}
    \centering
    \includegraphics[width=0.8\columnwidth]{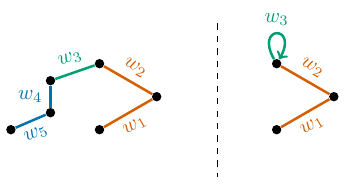}
    \caption{The graph $\mathcal{G}$ and the active graph $\mathcal{G}_{\mathcal{I}}$ corresponding to $L_{\mathcal{I}}$: inactive constraint nodes (blue), active constraint nodes (red), edges connecting active constraint nodes (orange), and edges connecting active and inactive constraint nodes (green).}
    \label{fig:graph}
\end{figure}

Now we can discuss the relation between the Laplacian $L_{\mathcal{I}}$ and the Jacobian of the of CFP constraints in edge coordinates. Let $\mathcal{J}$ denote the Jacobian of $K_I (BV\eta+P_L)$ with respect to $\eta \in \mathbb{R}^e$. At any optimizer $\eta^\star \in \mathbb{R}^e$ of \eqref{eq:pfangledifference}, we define
\begin{align*}
    \kappa \coloneq \lambda_{\min} \left(\mathcal{J}_{\mathcal{I_{\ell}} \cup \mathcal{I}_{u}}\mathcal{J}_{\mathcal{I_{\ell}} \cup \mathcal{I}_{u}}^\mathsf{T}\right),
\end{align*}
where $\mathcal{J}_{\mathcal{I}}$ collects the Jacobians of the rows of $K_I (BV\eta+P_L)$ that correspond to nodes $i \in \mathcal{I}$ with respect to $\eta \in \mathbb{R}^e$.

    \begin{remark}[\textbf{Radial network}]
    For radial networks, $\kappa$ reduces to the minimum eigenvalue of the Laplacian $L_{\mathcal{I}}$ scaled by the controller gains $K_{I_{\mathcal{I}}}$ of the nodes with inactive constraints, i.e, $\kappa = \lambda_{\min}(K_{I_{\mathcal{I}}} L_{\mathcal{I}} K_{I_{\mathcal{I}}})$. Notably, if $n \geq 2$ and at least one converter is not operating at its limits, then $L_{\mathcal{I}}$ is a reduced loopy Laplacian and hence $\lambda_{\min}(K_{I_{\mathcal{I}}}L_{\mathcal{I}}K_{I_{\mathcal{I}}}) >0$. Moreover, it follows from \cite[Prop.~2]{IG2025} that either $\mathcal{I}_{\ell} = \emptyset$ or $\mathcal{I}_{u} = \emptyset$, i.e., all constrained nodes are either at their upper or lower bound. Thus, it holds that either $\kappa = \lambda_{\min} \left(\mathcal{J}_{\mathcal{I_{\ell}}}\mathcal{J}_{\mathcal{I_{\ell}}}^\mathsf{T}\right)$ or $\kappa = \lambda_{\min} \left(\mathcal{J}_{\mathcal{I}_{u}}\mathcal{J}_{\mathcal{I}_{u}}^\mathsf{T}\right)$. Moreover, for uniform controller gains, i.e., $k_i = c \in \mathbb{R}_{> 0}$ for all $i\in \mathcal{N}$, it holds that $\kappa = c \lambda_{\min}(L_{\mathcal{I}})$.
    \end{remark}    

Please see proof of Proposition~\ref{prop:nodal_exp_stability} for similar insights for networks that are not radial. Finally we note that, if all nodes have active constraints (i.e., either reached their lower or upper power limits), it follows that $\kappa=0$ and, As will be shown subsequently, the frequency dynamics are not stable.

\subsection{Preliminary results and bounds}
This section presents several bounds that are required to bound the convergence rate $\beta \in \mathbb{R}_{>0}$. We begin by bounding the maximum eigenvalue of Laplacian $L$.
\begin{lemma}[\textbf{Maximum eigenvalue of the Laplacian}]\label{lem:laplacian_eigenvalues}
Consider the degree $d_i$ of node $i \in \mathcal{N}$, the minimum and maximum edge weights $w_{\min}$ and $w_{\max}$, and maximum node degree $d_{\max} \coloneqq \max_{i \in \mathcal{N}} d_i$. Then, the largest eigenvalue of the Laplacian $L$ is bounded by
\begin{align}\label{eq:laplacian_bound_max}
w_{\min}(1 + d_{\max} ) \leq \lambda_{\max} (L) \leq 2 w_{\max} d_{\max}.
\end{align}
\end{lemma}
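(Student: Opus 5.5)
The plan is to prove the two inequalities separately, using the Rayleigh quotient characterization $\lambda_{\max}(L) = \max_{x \neq 0} x^\mathsf{T} L x / x^\mathsf{T} x$. We also use the edge-sum form of the quadratic form,
\begin{align*}
x^\mathsf{T} L x = x^\mathsf{T} B W B^\mathsf{T} x = \sum_{(i,j)\in\mathcal{E}} w_{ij}(x_i - x_j)^2,
\end{align*}
where $w_{ij}$ denotes the weight of edge $(i,j)$. Only the facts that $\mathcal{G}$ is simple, connected and undirected and that $w_{\min} \le w_{ij} \le w_{\max}$ are needed.

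\emph{Upper bound.} For every edge, $(x_i-x_j)^2 \le 2x_i^2 + 2x_j^2$. Hence
\begin{align*}
x^\mathsf{T} L x \le 2 w_{\max} \sum_{(i,j)\in\mathcal{E}} (x_i^2 + x_j^2) = 2 w_{\max} \sum_{i\in\mathcal{N}} d_i x_i^2 \le 2 w_{\max} d_{\max} \norm{x}^2 .
\end{align*}
The equality in the middle holds because each $x_i^2$ appears once for every edge incident to node $i$. Dividing by $\norm{x}^2$ and maximizing gives $\lambda_{\max}(L) \le 2 w_{\max} d_{\max}$. Alternatively, one could apply Gershgorin's theorem to $L$: the diagonal entry of row $i$ is $\sum_j w_{ij}$ and the off-diagonal absolute row sum is the same quantity. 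Both routes are routine.

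\emph{Lower bound.} Here we choose a test vector adapted to a node $k$ of maximum degree $d_{\max}$. The vector $x = e_k$ alone only gives $\lambda_{\max} \ge L_{kk} \ge w_{\min} d_{\max}$, which is short by $w_{\min}$. To recover the $+1$, we use the standard star argument.

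\begin{enumerate}
\item Let $S$ be the subgraph consisting of node $k$, its $d_{\max}$ neighbors, and the edges joining $k$ to them. This is a star $K_{1,d_{\max}}$.
\item Since $L = \sum_{(i,j)\in\mathcal{E}} w_{ij} L_{(i,j)}$ with each edge Laplacian $L_{(i,j)} \succeq 0$, dropping edges and lowering weights gives $L \succeq w_{\min} L_S$. Here $L_S$ is the unweighted Laplacian of $S$, padded with zeros to size $n\times n$.
\item By monotonicity of $\lambda_{\max}$ under the Loewner order, $\lambda_{\max}(L) \ge w_{\min}\lambda_{\max}(L_S)$.
\item It remains to show $\lambda_{\max}(L_S) = d_{\max}+1$. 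Take the test vector with $x_k = d_{\max}$ and $x_j = -1$ on the neighbors of $k$. Then $x^\mathsf{T} L_S x = d_{\max}(d_{\max}+1)^2$ and $\norm{x}^2 = d_{\max}^2 + d_{\max}$, so the Rayleigh quotient equals $d_{\max}+1$.
\end{enumerate}

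The computation in step 4 can also be applied directly to $L$ with the same test vector, avoiding the Loewner comparison. In that case one must check that the edges among the neighbors of $k$, and the edges from those neighbors to other nodes, only add nonnegative terms to the quadratic form. They do, since every term is of the form $w_{ij}(x_i-x_j)^2 \ge 0$, and meanwhile $\norm{x}^2$ is unchanged. The star edges alone then contribute at least $w_{\min}\, d_{\max}(d_{\max}+1)^2$, which gives the claim.

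The main step requiring care is this lower bound, namely choosing the test vector so that the full $d_{\max}+1$ is obtained rather than just $d_{\max}$. The upper bound is a direct consequence of the inequality $(a-b)^2 \le 2a^2+2b^2$.
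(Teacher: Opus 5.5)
Your proof is correct and follows essentially the same route as the paper. The upper bound is the Gershgorin row-sum bound: your inequality $(x_i-x_j)^2\le 2x_i^2+2x_j^2$ is its quadratic-form version, and you also mention Gershgorin directly. The lower bound compares $L$ with $w_{\min}$ times an unweighted Laplacian and evaluates the Rayleigh quotient at the same star test vector $(d_{\max},-1,\dots,-1,0,\dots,0)$. Your write-up is slightly more careful than the paper's in two places. You work with $n$-dimensional quadratic forms throughout, whereas the paper writes $B^\mathsf{T}B$ where $BB^\mathsf{T}$ is meant; this is harmless, since both have the same largest eigenvalue. You also note explicitly that the non-star edges contribute only nonnegative terms.
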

A proof is provided in the Appendix. 

\begin{figure*}[t]
    \vspace*{-0.5cm}
  \centering
  \makebox[\textwidth][c]{
    \includegraphics[width=1.2\textwidth]{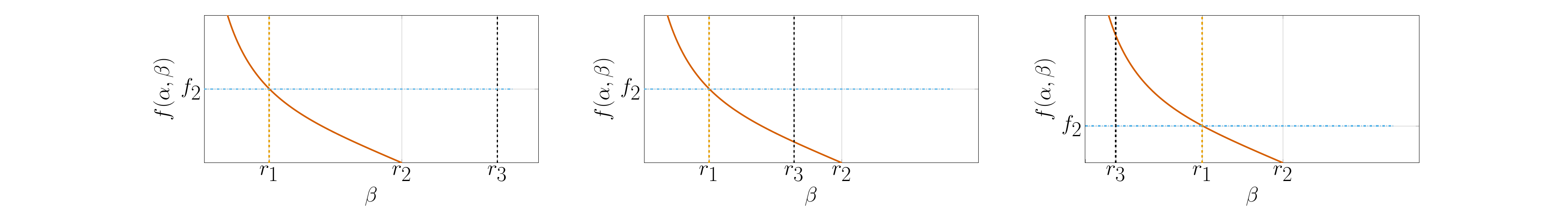}
  }
  \caption{Illustration of the bounds \eqref{eq:beta_constant} and \eqref{eq:beta_cubic} on the convergence rate $\beta$. The functions $f(\alpha, \beta) \coloneqq \frac{\kappa\alpha}{4\beta} - 4\beta^2$  and $f_2 = \mathcal{F}^2 + \frac{\kappa}{4} + (\gamma + \mathcal{M})(\alpha + \mathcal{M} + \frac{1}{\rho}) + \frac{1}{2\rho^2}$ are left and righthand side of the bound \eqref{eq:beta_cubic} respectively, where $r_1 = \arg_{\beta}(f(\alpha,\beta) = f_2)$, $r_2 = \sqrt[3]{\frac{\kappa \alpha}{16}}$, and $r_3 = \frac{\kappa \delta_{\min}}{46\rho \mathcal{F}^2}$.}\label{fig:beta}
\end{figure*}

Let $k_{\max} \coloneqq \max_{i \in \mathcal{N}} k_i$ denote the maximum control gain. We first introduce $\mathcal{S}=\sqrt{2 d_{\max} k_{\max} w_{\max}}$ and $\mathcal{F} = \sqrt{2 d_{\max} k_{\max} w_\Sigma}$ that, loosely speaking, bound the spectral and Frobenius norm. Next, we define
    \begin{align*}
        \delta_{\min}(\rho, k_i) &\coloneq 1 - \bigg[1+\rho \cdot \frac{\max\limits_{i \in N \setminus (\mathcal{I}_u\cup \mathcal{I_\ell})}{\sqrt{k_i} g_i(\eta^\star)}}{\nu_{0}}\bigg]_{+}^2, 
    \end{align*}
and by using $\mathcal{S}$ and $\mathcal{F}$, we have
\begin{align}\label{eq:M_theta}
    \mathcal{M}(\rho, \mathcal{F}, \mathcal{S}) =  \rho \mathcal{F}^2 + \mathcal{S}(\rho \mathcal{F} \nu_{0} + \nu_{0} + \norm{\mu^\star})
\end{align}
where $\nu_{0} \in \mathbb{R}_{>0}$ is the point to set distance of initial primal-dual pair $(\eta_0, \mu_{\ell, 0}, \mu_{u, 0})$ to the set of KKT points $\mathcal{V}_{\eta}$.

Moreover, consider $\alpha \coloneqq  m_{\min} \lambda_{\min}^{+}(L)$, where $\lambda^{+}_{\min}(L)$ denotes the smallest non-zero eigenvalue of $L$, and $\gamma \coloneqq 2w_{\max} m_{\max}d_{\max}$. Loosely speaking, $\alpha$ and $\gamma$ model the weakest and strongest connecitivity between nodes in the network. Broadly speaking, $\alpha$ corresponds to the weakest connection in the graph~\cite{LNS} multiplied by the lowest droop coefficient and $\gamma$ bounds the connecitivity of the node with the strongest connection multiplied by the largest droop coefficient. Then, the convergence rate bound $\beta$~\cite{TQL2020} is bounded by
\begin{subequations}\label{eq:beta_bound}
    \begin{align}
        \beta &\leq \frac{\kappa\delta_{\min}}{46\rho \mathcal{F}^2}, \label{eq:beta_constant}
        \\
        \!\!\frac{\kappa\alpha}{4\beta}\! - \!4\beta^2 &\!\geq\! \mathcal{F}^2 \!+\! \frac{\kappa}{4} \!+\! (\gamma \!+\! \mathcal{M})(\alpha \!+\! \mathcal{M} \!+\! \frac{1}{\rho}) \!+\! \frac{1}{2\rho^2}. \label{eq:beta_cubic}
    \end{align}
\end{subequations}
The bounds on the convergence rate are illustrated in Fig.~\ref{fig:beta}. Finally, we show that the norm $\norm{\mu^\star}$ of the optimal dual multipliers $\mu^\star$ can be bounded independently of $\rho$ and $L$.

\begin{lemma}[\textbf{Bounds on the optimal dual multiplier}]\label{lem:mubound}
For any KKT point of \eqref{eq:deCFP}, it holds that $\norm{\mu^\star} \leq \frac{m_{\max}}{\sqrt{k_{\min}}} \varrho$, where $\varrho =  \sum_{i=1}^n  (P^\star_i - P_{L,i}) - \sum_{i=1}^n  \frac{\omega_s}{m_i}$ and $k_{\min}=\min_{i \in \mc N} k_i$.
\end{lemma}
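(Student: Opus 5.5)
The plan is to express the optimal multipliers explicitly through the KKT conditions, using the synchronous frequency $\omega_s$, and then sum over nodes. The link to nodal coordinates is the scaling $\lambda = K_I\mu$ from $T_\eta$ in Proposition~\ref{prop:equivalency}.

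\textbf{Step 1 (identify the constant in $\ker B^\mathsf{T}$).} Since $\mathcal{G}$ is connected, $\ker(B^\mathsf{T}) = \vspan\{\mathbbl{1}_n\}$. The optimality condition \eqref{eq:KKT:edge:optimality} therefore reads $M(P-P^\star) + K_I(\mu^\star_u-\mu^\star_\ell) = c\,\mathbbl{1}_n$ for some $c\in\mathbb{R}$, where $P = BV\eta^\star + P_L$. At a KKT point the state is an equilibrium by Proposition~\ref{prop:nodal_exp_stability}, so $\Pi_{\mathbb{R}^{2n}_{\geq 0}}(\rho g + \lambda^\star) = \lambda^\star$. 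Substituting this into \eqref{eq:vector_net_nodal_orig} gives $\omega = M(P^\star-P) - (\lambda^\star_u-\lambda^\star_\ell)$. Since $\omega = \omega_s\mathbbl{1}_n$ by Theorem~\ref{thm:nodal_exp_stability}, we get $c=-\omega_s$, and componentwise
\begin{align*}
\sqrt{k_i}\,(\mu^\star_{u,i}-\mu^\star_{\ell,i}) = m_i(P^\star_i-P_i) - \omega_s .
\end{align*}

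\textbf{Step 2 (complementarity).} By Assumption~\ref{assum:feas}, $P_{\ell,i}<P_{u,i}$, so at most one of $\mu^\star_{\ell,i},\mu^\star_{u,i}$ is nonzero. For $i\notin\mathcal{I}$ both vanish, so the right-hand side above is zero there. Hence
\begin{align*}
\norm{\mu^\star} \leq \norm{\mu^\star}_1 = \sum_{i=1}^n \frac{|m_i(P^\star_i-P_i)-\omega_s|}{\sqrt{k_i}} \leq \frac{m_{\max}}{\sqrt{k_{\min}}}\sum_{i=1}^n \Big|P^\star_i - P_i - \frac{\omega_s}{m_i}\Big| .
\end{align*}

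\textbf{Step 3 (removing the absolute values; main obstacle).} The key point is that all nonzero summands have the same sign. By \cite[Prop.~2]{IG2025}, used in the remark on radial networks, either $\mathcal{I}_\ell=\emptyset$ or $\mathcal{I}_u=\emptyset$. Consequently every nonzero summand equals either $+\mu^\star_{u,i}\sqrt{k_i}/m_i \ge 0$ or $-\mu^\star_{\ell,i}\sqrt{k_i}/m_i \le 0$, all with a common sign. The sum of absolute values then equals the absolute value of the plain sum, since inactive nodes contribute exactly zero. Finally, $\mathbbl{1}_n^\mathsf{T}L=0$ gives $\sum_i P_i = \sum_i P_{L,i}$, so the plain sum is $\sum_i(P^\star_i-P_{L,i}) - \sum_i \omega_s/m_i = \varrho$.

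The delicate points are these. First, justify that the sign-consistency result of \cite[Prop.~2]{IG2025} applies to general (not only radial) networks, or else invoke the KKT characterization in \cite{IG2025} directly. Second, interpret the bound with $|\varrho|$, i.e.\ as $-\varrho$ in the lower-limit case. If only the radial version is available, I would instead prove sign consistency directly. Suppose $i\in\mathcal{I}_u$ and $j\in\mathcal{I}_\ell$. Then $m_i(P^\star_i-P_{u,i}) \ge \omega_s \ge m_j(P^\star_j-P_{\ell,j})$. Assumption~\ref{assum:setpoint} gives $P^\star_i - P_{u,i} < 0$ and $P^\star_j - P_{\ell,j} > 0$, so the left-hand side is negative and the right-hand side is positive, a contradiction.
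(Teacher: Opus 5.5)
Your proof is correct and follows essentially the paper's route: stationarity at the KKT point with $-\omega_s\mathbbl{1}_n\in\ker(B^\mathsf{T})$, complementary slackness, a one-sided active set, summation using $\sum_i P_i=\sum_i P_{L,i}$, and $\norm{\mu^\star}\le\norm{\mu^\star}_1$. One small correction: under $T_\eta$ the scaling is $\mu=K_I\lambda$, not $\lambda=K_I\mu$, and your gain-free expression for $\omega$ drops the gain in the control law. Your Step~1 identity is still right, and the cleanest way to get it is directly from \eqref{eq:KKT:edge:optimality} with the constant equal to $-\omega_s$.

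Your direct proof of sign consistency from Assumption~\ref{assum:setpoint} replaces the paper's appeal to \cite{IG2025}. You are also right that when $\mathcal{I}_u=\emptyset$ one gets $\varrho\le 0$, so the bound only holds in the form $\norm{\mu^\star}\le\frac{m_{\max}}{\sqrt{k_{\min}}}|\varrho|$; the paper's claim that this case ``follows from the same steps'' glosses over this.
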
    
A proof is provided in the Appendix.

\section{Sensitivity analysis of the convergence rate}\label{sec:convrate}

\subsection{Bounds on the control gain $\rho$}
Next, we bound the gain $\rho \in \mathbb{R}_{>0}$ to narrow down the range of gains $\rho \in \mathbb{R}_{>0}$ to be considered for controller tuning.

\begin{proposition}[\textbf{Bounded $\rho^\star$}]\label{prop:rho_bound}
    The gain $\rho^\star \in \mathbb{R}_{>0}$ that maximizes the convergence rate bound $\beta \in \mathbb{R}_{>0}$ for the projection-free network dynamics is bounded by $c_1 = \mathcal{F}^2 + \frac{\kappa}{4} + \gamma \alpha$ and
    \begin{align*}
       \frac{2c_1}{23\kappa \alpha} < \rho^\star \leq 1 + \frac{1}{23\kappa\alpha}\max\{2c_1, 2\gamma, \frac{1062}{1058}\}.
    \end{align*}
    \end{proposition}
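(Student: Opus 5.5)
The plan is to treat the convergence rate bound as a function of $\rho$, namely $\beta(\rho) = \min\{r_1(\rho), r_3(\rho)\}$. Here $r_3(\rho) = \frac{\kappa\delta_{\min}}{46\rho\mathcal{F}^2}$ is the right-hand side of \eqref{eq:beta_constant}. The quantity $r_1(\rho)$ is the largest $\beta$ satisfying \eqref{eq:beta_cubic}, which is well defined because the left-hand side $\frac{\kappa\alpha}{4\beta}-4\beta^2$ is strictly decreasing in $\beta$ (see Fig.~\ref{fig:beta}). Then $\rho^\star$ maximizes this minimum. The two bounds come from monotonicity arguments. For small $\rho$, \eqref{eq:beta_cubic} is the binding constraint, and its right-hand side is dominated by the terms in $\frac{1}{\rho}$ and $\frac{1}{2\rho^2}$, so increasing $\rho$ helps. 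For large $\rho$, \eqref{eq:beta_constant} forces $\beta = \mathcal{O}(1/\rho)$, and $\mathcal{M}$ grows linearly in $\rho$ by \eqref{eq:M_theta}, so increasing $\rho$ hurts. Throughout I use $0<\delta_{\min}\le 1$, $\mathcal{M}\ge 0$, and the fact that $\mathcal{M}$ is affine and nondecreasing in $\rho$.

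\emph{Lower bound.} I would drop the nonnegative terms $\mathcal{M}(\alpha+\mathcal{M}+\frac1\rho)$ and $\gamma\mathcal{M}$ from the right-hand side of \eqref{eq:beta_cubic}. What remains is the necessary condition
\[
\frac{\kappa\alpha}{4\beta} \ge c_1 + \frac{\gamma}{\rho} + \frac{1}{2\rho^2},
\]
which implies $\beta \le \kappa\alpha/(4c_1)$. I would then compare this with \eqref{eq:beta_constant} at the optimum. If $\rho^\star \le \frac{2c_1}{23\kappa\alpha}$, then $r_3(\rho^\star)\ge \frac{\kappa\delta_{\min}\cdot 23\kappa\alpha}{46\cdot 2c_1\mathcal{F}^2}$, so the constant bound is slack and $\beta(\rho^\star) = r_1(\rho^\star)$. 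Differentiating \eqref{eq:beta_cubic} implicitly in $\rho$ should then show $\partial r_1/\partial\rho>0$ in this regime. The reason is that the decrease of $\frac{\gamma}{\rho}+\frac{1}{2\rho^2}$ dominates the increase of $\mathcal{M}$ and $\mathcal{M}^2$, which is at most of order $\mathcal{F}^2$ per unit $\rho$. This contradicts optimality, and the constant $23=46/2$ is designed to make this comparison close.

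\emph{Upper bound.} I would pick the reference value $\bar\rho=1$ (this is where the leading $1$ comes from). At $\bar\rho$, I would lower bound $\beta(1)$ using the explicit cube-root estimate $r_2$ of $r_1$ together with \eqref{eq:beta_constant}. For $\rho>\bar\rho$, the upper estimate $\beta(\rho)\le r_3(\rho)\le \frac{\kappa}{46\rho\mathcal{F}^2}$ decreases like $1/\rho$. At the same time, the right-hand side of \eqref{eq:beta_cubic} is at least $c_1+\gamma\alpha$-type constants plus terms increasing in $\rho$. Requiring $\beta(\rho)\ge\beta(1)$ then forces $\rho-1\le \frac{1}{23\kappa\alpha}\max\{2c_1,2\gamma,\frac{1062}{1058}\}$. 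The three cases of the maximum correspond to which term dominates the right-hand side of \eqref{eq:beta_cubic} at $\rho=1$: the constant $c_1$, the $\gamma/\rho$ term, or the $\frac{1}{2\rho^2}$ term. The last case, combined with the $4\beta^2$ correction, produces the numeric constant $\frac{1062}{1058}$. I would split into these three cases and in each bound $\beta(1)$ from below by $\kappa\alpha/(4\cdot\text{dominant term})$, up to the quadratic correction.

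I expect the main obstacle to be the implicit dependence of $\delta_{\min}$ and $\mathcal{M}$ on $\rho$ in the monotonicity step for the lower bound. I would handle it by replacing $\delta_{\min}$ with $1$ in upper estimates, and by bounding $\mathcal{M}(\rho)\le\rho(\mathcal{F}^2+\mathcal{S}\mathcal{F}\nu_0)+\mathcal{S}(\nu_0+\norm{\mu^\star})$, where $\norm{\mu^\star}$ is controlled independently of $\rho$ by Lemma~\ref{lem:mubound}. The bookkeeping of the $4\beta^2$ term is where I expect the specific constants $23$ and $1062/1058$ to come from, and it may require some slack in the claimed inequalities.
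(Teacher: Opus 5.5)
Your lower-bound argument fails at the ``slack'' step, and the monotonicity route behind it cannot be repaired. From $\rho^\star\le\frac{2c_1}{23\kappa\alpha}$ you get $r_3(\rho^\star)\ge\frac{\kappa\alpha}{4c_1}\cdot\frac{\kappa\delta_{\min}}{\mathcal{F}^2}$. But $\delta_{\min}\le 1$ and $\kappa\le\mathcal{F}^2$, because each diagonal entry of $\mathcal{J}_{\mathcal{I}}\mathcal{J}_{\mathcal{I}}^\mathsf{T}$ is at most $k_{\max}d_{\max}w_{\max}$. So this estimate lies \emph{below} your bound $r_1\le\frac{\kappa\alpha}{4c_1}$, and it says nothing about which constraint binds. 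The claimed $\partial r_1/\partial\rho>0$ on $(0,\frac{2c_1}{23\kappa\alpha}]$ is also false in general. Let $R(\rho)$ denote the right-hand side of \eqref{eq:beta_cubic}. Then $r_1<\frac{\kappa\alpha}{4R}$ and $R\ge\mathcal{M}^2\ge\rho^2\mathcal{F}^4$. Using $c_1\ge\mathcal{F}^2$, every $\rho\ge\frac{2c_1}{23\kappa\alpha}$ therefore gives $\beta(\rho)\le r_1(\rho)<\frac{529\kappa^3\alpha^3}{16\mathcal{F}^8}$. By contrast, $\beta(1)=\min\{r_1(1),r_3(1)\}$ is of order $\alpha$. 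Hold the other constants in \eqref{eq:beta_bound} fixed and let $\alpha\to0$: the maximizer of $\min\{r_1,r_3\}$ then lies below $\frac{2c_1}{23\kappa\alpha}$. Read literally as a statement about $\arg\max_\rho\min\{r_1(\rho),r_3(\rho)\}$, the lower bound does not hold in general. No optimality or monotonicity argument of the kind you sketch can produce it.

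The paper's proof does not analyze the true maximizer; it reduces everything to a single cubic in $\rho$.
\begin{itemize}
\item It drops the $\mathcal{M}$-terms in \eqref{eq:beta_cubic}, leaving $c_1+\frac{\gamma}{\rho}+\frac{1}{2\rho^2}$ on the right.
\item It inserts $\beta=\frac{1}{46\rho}$, which is the right-hand side of \eqref{eq:beta_constant} with $\kappa\delta_{\min}/\mathcal{F}^2$ replaced by $1$, and multiplies by $\rho^2$.
\item Since $4\beta^2\rho^2=\frac{1}{529}$ and $\frac12+\frac{1}{529}=\frac{531}{1058}$, this gives $p(\rho)=\frac{23}{2}\kappa\alpha\rho^3-c_1\rho^2-\gamma\rho-\frac{531}{1058}>0$. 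The paper treats this as the feasibility condition on $\rho^\star$.
\end{itemize}
The lower bound is then immediate: $\gamma\rho+\frac{531}{1058}>0$ forces $\frac{23}{2}\kappa\alpha\rho>c_1$. The upper bound is Cauchy's bound $1+\max_{k<3}|a_k|/|a_3|$ on the roots of $p$, with $a_k$ the coefficients of $p$; in particular it bounds the unique positive root. This is exactly $1+\frac{1}{23\kappa\alpha}\max\{2c_1,2\gamma,\frac{1062}{1058}\}$.

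So the leading $1$ is not a reference value $\bar\rho=1$. The three entries of the maximum are the coefficient ratios of $p$, not cases of which term dominates at $\rho=1$. Your comparison of $\beta(\rho)\le\frac{\kappa}{46\rho\mathcal{F}^2}$ with $\beta(1)$ would instead yield $\rho^\star\le\max\{1,\frac{\kappa}{46\mathcal{F}^2\beta(1)}\}$. That expression involves $\mathcal{F}^2$, $\delta_{\min}(1)$ and $\mathcal{M}(1)$, not the stated bound. You correctly traced $\frac{1062}{1058}$ to the $\frac{1}{2\rho^2}$ term plus the $4\beta^2$ correction. What is missing is the substitution $\beta=\frac{1}{46\rho}$, which turns \eqref{eq:beta_cubic} into a polynomial inequality in $\rho$, followed by Cauchy's root bound.
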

A proof is provided in the Appendix.

\subsection{ Impact of network topology}\label{subsec:networktop}
In addition to the control gains, the convergence rate of the projection-free networked dynamics \eqref{eq:vector_net_nodal_orig} to the KKT points of \eqref{eq:deCFP} crucially depends on the network topology and edge weights (e.g., transmission line susceptances). In this section, we analyze the impact of key network parameters on the convergence rate.

To this end, recall that $d_i$, $i \in \mathcal{N}$, and $d_{\max}$ denote the node degree and maximum node degree, respectively. The following proposition establishes that, under some mild technical assumptions, the convergence rate $\beta$ increases when the connectivity of the graph $\mc G$ increases.

\begin{proposition}[\textbf{Adding an edge to $\mc G$}]\label{prop:connectivity}
Consider a graph $\mathcal{G}^\prime$ obtained by adding an edge with weight $w_{e+1} \leq w_{\max}$ between any two nodes $(i,j) \notin \mathcal{E}$ of the graph $\mathcal{G}$ that satisfy $d_l < d_{\max}$, $l \in {i,j}$. Let $\beta_{\mc G}$ and $\beta_{\mc G^\prime}$ denote the convergence rate bounds corresponding to the dynamics over the graphs $\mathcal{G}$ and $\mathcal{G}^\prime$. If the set of active nodes does not change and \eqref{eq:beta_constant} is binding, then $\beta_{\mc G} = \beta_{\mc G^\prime}$. Let $k_{\min} > \frac{\nu^2_{0}}{(P_{\text{res}}\rho)^2}$, and assume that $P_{\text{res}} = \max\limits_{i \in N \setminus (\mathcal{I}_u\cup \mathcal{I_\ell})}{-g_i(\eta^\star)}$ does not change after adding the edge. Moreover, assume that $\rho \in \mathbb{R}_{>0}$ satisfies
    \begin{align*}
        \rho(\frac{21}{2} - \sqrt{\frac{w_{\max}}{w_{\Sigma}}}\nu_0) &\leq \frac{m_{\max}w_{\max}}{k_{i, \max} w_{\Sigma}} + \frac{\sqrt{w_{\max}}}{\sqrt{k_{i, \max} d_{\max}}w_{\Sigma}}\nu_{0}\\
            + \frac{m_{\max}}{\sqrt{k_{\min}}\mathcal{F}^2}\varrho.
    \end{align*}
If the set of active nodes does not change and \eqref{eq:beta_cubic} is binding, then $\beta_{\mc G^\prime} \geq \beta_{\mc G}$.
    \end{proposition}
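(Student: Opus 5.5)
Our plan is to track how each constant in \eqref{eq:beta_bound} changes when $\mc G$ is replaced by $\mc G^\prime$, and then use monotonicity of the two bounds in these constants. Adding an edge $(i,j)$ with $w_{e+1}\le w_{\max}$ between nodes with $d_i,d_j<d_{\max}$ leaves $d_{\max}$ and $w_{\max}$ unchanged; hence $\mathcal{S}$ and $\gamma=2w_{\max}m_{\max}d_{\max}$ are unchanged. The total weight $w_\Sigma$ increases to $w_\Sigma+w_{e+1}$, so $\mathcal{F}^2$ increases by $2d_{\max}k_{\max}w_{e+1}$. Moreover, $L^\prime=L+w_{e+1}b b^\mathsf{T}$ with $b=e_i-e_j\perp\mathbbl{1}_n$. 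By Weyl/Courant--Fischer on $\mathbbl{1}_n^\perp$, $\lambda^+_{\min}(L^\prime)\ge\lambda^+_{\min}(L)$, so $\alpha$ does not decrease. Since the active set is unchanged, $\mathcal{J}_{\mc I}\mathcal{J}_{\mc I}^\mathsf{T}=K_{I,\mc I}B_{\mc I}WB_{\mc I}^\mathsf{T}K_{I,\mc I}$ acquires the positive semidefinite term $w_{e+1}K_{I,\mc I}b_{\mc I}b_{\mc I}^\mathsf{T}K_{I,\mc I}$. Hence $\kappa^\prime\ge\kappa$. Finally, $\norm{\mu^\star}$ is bounded via Lemma~\ref{lem:mubound} by a quantity independent of $L$, and $P_{\text{res}}$ is assumed fixed.

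\textbf{Case \eqref{eq:beta_constant} binding.} Here $\beta=\kappa\delta_{\min}/(46\rho\mathcal{F}^2)$. We want to show that the increase of $\kappa$ exactly compensates that of $\mathcal{F}^2$, and that $\delta_{\min}$ is unchanged. For $\delta_{\min}$: with $k_{\min}>\nu_0^2/(P_{\text{res}}\rho)^2$, the bracket $1-\rho\sqrt{k_i}P_{\text{res}}/\nu_0$ is negative, so $[\cdot]_+=0$ and $\delta_{\min}=1$ for both graphs. For the ratio $\kappa/\mathcal{F}^2$, I would try to show that, for unchanged active sets, the new edge only adds weight in the same proportion, i.e.\ $\kappa$ and $\mathcal F^2$ both scale with the relevant weight in the way the paper's definitions intend. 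I expect this equality to be delicate: literally it needs $\kappa$ to grow proportionally to $w_\Sigma$. My fallback is to establish $\beta_{\mc G^\prime}\ge\beta_{\mc G}$ via the Weyl argument on $\kappa$ and argue equality under the unchanged-active-set hypothesis.

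\textbf{Case \eqref{eq:beta_cubic} binding.} Here $\beta$ is the largest root of $h(\beta):=\kappa\alpha/(4\beta)-4\beta^2=f_2$. Since $h$ is decreasing in $\beta$ and increasing in $\kappa\alpha$, it suffices to show that the right-hand side $f_2=\mathcal{F}^2+\kappa/4+(\gamma+\mathcal{M})(\alpha+\mathcal{M}+1/\rho)+1/(2\rho^2)$ does not grow faster than $\kappa\alpha/(4\beta)$. The main obstacle is that $f_2$ also increases through $\mathcal{F}^2$ and through $\mathcal{M}=\rho\mathcal{F}^2+\mathcal{S}(\rho\mathcal{F}\nu_0+\nu_0+\norm{\mu^\star})$. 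My approach is to differentiate $f_2$ with respect to $w_\Sigma$ and compare it to the derivative of the left-hand side. I would write $\partial\mathcal{F}^2/\partial w_\Sigma=2d_{\max}k_{\max}$ and $\partial\mathcal{M}/\partial w_\Sigma=\rho(2d_{\max}k_{\max})+\mathcal{S}\rho\nu_0 d_{\max}k_{\max}/\mathcal{F}$. I would then bound $\mathcal{M}$ from above using Lemma~\ref{lem:mubound}, expressing $\norm{\mu^\star}\le m_{\max}\varrho/\sqrt{k_{\min}}$. After dividing by $\mathcal{F}^2$, the stated inequality on $\rho$, with its terms $m_{\max}w_{\max}/(k_{\max}w_\Sigma)$, $\sqrt{w_{\max}}\nu_0/(\sqrt{k_{\max}d_{\max}}w_\Sigma)$ and $m_{\max}\varrho/(\sqrt{k_{\min}}\mathcal{F}^2)$, should be exactly the condition that the net change of $f_2$ is dominated. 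The constant $21/2$ would come from collecting the coefficients of $\rho$ after using $\mathcal{S}^2/\mathcal{F}^2=w_{\max}/w_\Sigma$. Concluding $h_{\mc G^\prime}\ge h_{\mc G}$ at the old root then gives $\beta_{\mc G^\prime}\ge\beta_{\mc G}$. I expect this bookkeeping, i.e.\ matching the given inequality on $\rho$, to be the hardest step.
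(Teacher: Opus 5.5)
\textbf{Verdict: there is a genuine gap.} Your more literal bookkeeping is exactly what keeps both cases from closing, and neither route you sketch can be completed.

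\textbf{Case \eqref{eq:beta_constant}.} You correctly note $\mathcal F'^2=\mathcal F^2+2d_{\max}k_{\max}w_{e+1}>\mathcal F^2$. However, the fallback ``$\beta_{\mc G'}\ge\beta_{\mc G}$ via Weyl on $\kappa$'' does not follow. If $i,j\notin\mathcal I$, the new column of $\mathcal J_{\mathcal I}$ is zero, so $\kappa'=\kappa$ and $\kappa\delta_{\min}/(46\rho\mathcal F^2)$ strictly decreases.

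\textbf{Case \eqref{eq:beta_cubic}.} Dominating the $w_\Sigma$-derivative of $f_2$ cannot work in general, because an admissible edge need not increase $\lambda^+_{\min}(L)$ at all. Join two leaves $i,j$ of a unit-weight star $K_{1,m}$, $m\ge 3$. Then $\lambda^+_{\min}=1$ is unchanged, since $v$ with $v_i=v_j=1$, $v_k=-2$ at a third leaf and $0$ elsewhere is a Fiedler vector with $v_i-v_j=0$. With $i,j$ inactive, the left-hand side of \eqref{eq:beta_cubic} is unchanged for every $\beta$. Meanwhile, for fixed $\nu_0$, $f_2$ strictly grows through $\mathcal F^2$ and $\mathcal M$, so no condition on $\rho$ can rescue the argument.

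Your account of the constant is also off. The $46$ appears only in \eqref{eq:beta_constant}, and $\tfrac{21}{2}=\tfrac{46}{4}-1$; it is not a by-product of $\mathcal S^2/\mathcal F^2$. A minor point, inherited from the statement: $\delta_{\min}=1$ needs $\rho\sqrt{k_i}\,\lvert g_i(\eta^\star)\rvert\ge\nu_0$ for the \emph{smallest} inactive residual, whereas $P_{\text{res}}$ is defined as a maximum.

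\textbf{The paper's route.} The paper's proof is a sensitivity argument in $\alpha$ alone. It uses only that the new edge increases $\lambda^+_{\min}(L)$ and hence $\alpha$. It holds $\kappa,\gamma,\mathcal F,\mathcal S,\mathcal M,\delta_{\min}$ fixed, with $d_{\max}$, $w_{\max}$ and the active set unchanged.
\begin{itemize}
\item Bound \eqref{eq:beta_constant} contains no $\alpha$, so it is unaffected.
\item For \eqref{eq:beta_cubic}, write $f(\alpha,\beta)=\frac{\kappa\alpha}{4\beta}-4\beta^2$. Then $\partial_\alpha(f-f_2)=\frac{\kappa}{4\beta}-(\gamma+\mathcal M)$, which is nonnegative iff $\beta\le\frac{\kappa}{4(\gamma+\mathcal M)}$.
\item Comparing this threshold with $\frac{\kappa\delta_{\min}}{46\rho\mathcal F^2}$ at $\delta_{\min}=1$ gives $\tfrac{46}{4}\rho\mathcal F^2\le\gamma+\mathcal M$.
\item Inserting the definition of $\mathcal M$, Lemma~\ref{lem:mubound}, and the expressions for $\gamma/\mathcal F^2$, $\mathcal S/\mathcal F^2$, $\mathcal S/\mathcal F$ yields the stated condition on $\rho$.
\end{itemize}
Within this convention the sign is in fact automatic: any $\beta$ satisfying \eqref{eq:beta_cubic} has $\frac{\kappa\alpha}{4\beta}>f_2\ge(\gamma+\mathcal M)\alpha$.

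So to obtain the proposition you must adopt the paper's convention that only $\alpha$ moves. Your own observation that $\mathcal F$ depends on $w_\Sigma$ shows that the conclusion does not survive if $\mathcal F$ and $\kappa$ are tracked literally.
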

A proof is provided in the Appendix. This results shows that, under mild technical assumptions, adding edges that connect previously weakly connected nodes (i.e., increasing the connecitivity of the network) increases the convergence rate of the projection-free networked dynamics. Specifically, for a sufficiently large gain $\rho \in \mathbb{R}_{>0}$, the convergence rate can be improved by introducing an edge with a weight equal or lower than $w_{\max}$ that connects any two nodes with degree less than $d_{\max}$. This result crucially hinges on Lemma~\ref{lem:mubound}, i.e., that the optimal dual multipliers $\mu^\star$ are independent of the gain $\rho$ and $L$ at any KKT point.

Finally, the following corollary establishes the same result for the simpler setup of a graph with identical weights.       

\begin{corollary}[\textbf{Uniform edge weights}]\label{corr:edge_uniformweight}
    Consider a graph $\mathcal{G}$ with uniform edge weights $w_i = w^\prime$ for all $i \in \{1, \ldots, e\}$, let $k_{i, \max} = s_m m_{\max}$ denote the controller gain as function of the maximum droop gain $m_{\max}$, and let $s_m \in \mathbb{R}_{>1}$. For any connected graph $\mathcal{G}$ with $n\geq 3$ and $e$ edges, the lower bound on $\rho$ in Proposition~\ref{prop:connectivity} can be replaced by
        \begin{align*}
        \left(\tfrac{21}{2}e - \sqrt{e} \nu_{0}\right) \rho \leq \frac{1}{s_m}+\frac{\nu_{0}}{\sqrt{k_{i, \max} d_{\max}w^\prime}} + \frac{e m_{\max}}{\sqrt{k_{\min}}\mathcal{F}^2}\varrho.
    \end{align*}    
\end{corollary}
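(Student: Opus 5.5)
The plan is to specialize the condition on $\rho$ in Proposition~\ref{prop:connectivity} to uniform weights and then multiply through by a positive constant so that the result takes the form stated. With $w_i = w^\prime$ for all edges, we have $w_{\max} = w^\prime$ and $w_\Sigma = e w^\prime$. This gives the substitutions
\begin{align*}
\sqrt{\tfrac{w_{\max}}{w_\Sigma}} = \tfrac{1}{\sqrt{e}}, \qquad \tfrac{m_{\max} w_{\max}}{k_{i,\max} w_\Sigma} = \tfrac{m_{\max}}{e\, k_{i,\max}} = \tfrac{1}{e\, s_m},
\end{align*}
where the last equality uses $k_{i,\max} = s_m m_{\max}$. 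For the term involving $\nu_0$,
\begin{align*}
\tfrac{\sqrt{w_{\max}}}{\sqrt{k_{i,\max} d_{\max}}\, w_\Sigma} = \tfrac{1}{e\sqrt{k_{i,\max} d_{\max} w^\prime}}.
\end{align*}
The condition of Proposition~\ref{prop:connectivity} then reads
\begin{align*}
\rho\left(\tfrac{21}{2} - \tfrac{\nu_0}{\sqrt{e}}\right) \leq \tfrac{1}{e s_m} + \tfrac{\nu_0}{e\sqrt{k_{i,\max} d_{\max} w^\prime}} + \tfrac{m_{\max}}{\sqrt{k_{\min}}\mathcal{F}^2}\varrho .
\end{align*}

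Since $e>0$, multiplying both sides by $e$ preserves the inequality. Using $e/\sqrt{e} = \sqrt{e}$, this yields exactly the displayed condition of the corollary:
\begin{align*}
\left(\tfrac{21}{2}e - \sqrt{e}\,\nu_0\right)\rho \leq \tfrac{1}{s_m} + \tfrac{\nu_0}{\sqrt{k_{i,\max} d_{\max} w^\prime}} + \tfrac{e\, m_{\max}}{\sqrt{k_{\min}}\mathcal{F}^2}\varrho .
\end{align*}
Consequently, every $\rho$ satisfying the corollary's condition also satisfies the condition of Proposition~\ref{prop:connectivity}, and the conclusion $\beta_{\mc G^\prime}\geq\beta_{\mc G}$ carries over.

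The remaining hypotheses of Proposition~\ref{prop:connectivity} must still be checked in the uniform setting. First, the added edge must satisfy $w_{e+1}\leq w_{\max}$. Since $w_{e+1}=w^\prime$, this holds with equality, and uniformity of the weights is preserved in $\mc G^\prime$. Second, there must exist nodes $i,j$ with $d_i,d_j<d_{\max}$ and $(i,j)\notin\mc E$. This is the role of the assumption $n\geq 3$: for $n\le 2$ a connected simple graph is complete, so no edge can be added. I would note that the existence of such a pair is an implicit premise inherited from Proposition~\ref{prop:connectivity} rather than something the corollary must establish.

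The main subtlety is that $\mathcal{F}$ and $w_\Sigma$ change when the edge is added ($e\to e+1$). I need to confirm that the condition is meant to be evaluated on the original graph $\mc G$, exactly as in Proposition~\ref{prop:connectivity}, so that the substitution is purely algebraic. Beyond that, the argument is routine bookkeeping: keep $\mathcal{F}=\sqrt{2d_{\max}k_{\max}w_\Sigma}$ unexpanded, as the corollary's statement does.
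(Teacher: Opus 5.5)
Your argument is correct and is exactly the intended one: the paper gives no separate proof of this corollary, and it follows by substituting $w_{\max}=w^\prime$, $w_\Sigma = e w^\prime$ and $k_{i,\max}=s_m m_{\max}$ into the condition of Proposition~\ref{prop:connectivity} and multiplying by $e>0$, which gives an equivalent inequality. Your side remarks are not needed for this algebra and do not affect correctness; note also that the hypothesis $n\geq 3$ seems to come from the footnote to Lemma~\ref{prop.lipschitz}, not from needing a non-adjacent pair of nodes.
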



\begin{figure*}[htbp]
    \centering
    \includegraphics[width=0.95\linewidth]{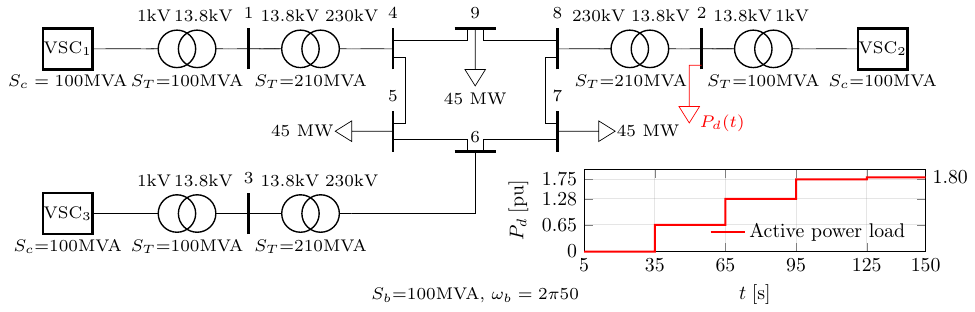}
    \caption{IEEE 9-bus test case system with three two-level voltage source converters and constant impedance (black) and constant power loads (red).        \label{fig:9bus_system}}
\end{figure*}

\section{Numerical Case Study}\label{sec:numerical}
To illustrate and validate our analytical results obtained using the reduced-order model \eqref{eq:newplimdroopsc} and \eqref{eq:dcpfeq}, we use an electromagnetic transient (EMT) simulation of the IEEE 9-bus system (see Fig.~\ref{fig:9bus_system}) with three VSCs controlled by projection-free power limiting droop control.
\subsection{Power system model}
We replaced the three synchronous generators in the IEEE 9-bus system with voltage source converters (VSCs). An average model of two-level VSCs with $LC$ output filter and standard cascaded inner voltage and current loops is used. Details on the transformer parameters, converter parameters, and control gains of the inner control loops can be found in~\cite{TGA+2020}. Table~\ref{table:parameters} and \cite[Table I]{IG2025} summarize the converter rating, power setpoints, power limits, and control gains for power-limiting droop control used in this work. In addition to the standard IEEE 9-bus base load a time-varying constant power load (see Fig.\ref{fig:9bus_system}) is introduced to create overload conditions for the VSCs. Notably, the reduced-order model \eqref{eq:newplimdroopsc} is obtained by applying kron-reduced~\cite{DB2013} to obtain a network model with three buses and assuming that active power and frequency are decoupled from reactive power and voltage magnitude.
\subsection{Simulation results and discussion}
Simulation results are shown in Fig.~\ref{fig:freqresponse} and Fig.~\ref{fig:comparison}. EMT simulation results are shown in the top row and simulation results obtained using the reduced-order model \eqref{eq:newplimdroopsc} are shown in the bottom row. The pink markers in the bottom row of Fig.~\ref{fig:freqresponse} indicate the frequency deviation predicted by \cite[Theorem 3]{IG2025}. Under the same load profile used in \cite{IG2025}, we validate the analytical results for the synchronous frequency, and observe that the analytical results closely match the EMT simulation and results obtained using the reduced-order model \eqref{eq:newplimdroopsc}.

Finally, to compare projection-based~\cite{IG2025} and projection-free power-limiting droop control \eqref{eq:newplimdroopsc} the response of active power to a load increase and subsequent convergence are shown in Fig.~\ref{fig:comparison}. Notably, before the load increase at $t = 95~\mathrm{s}$, VSC 2 is overloaded. In addition, the load increase at $t = 95~\mathrm{s}$ overloads VSC 3. Moreover, after the load increase at $t = 125~\mathrm{s}$ all VSCs operate at their maximum active power. In addition, we validate the improvement of the convergence rate according to Theorem~\ref{thm:nodal_exp_stability} by scaling controller gains, i.e., using the scaling $s=1.66$ improves the convergence rate relative to using the scaling $s=1$.
\input{ieee9bus_table.tex}

\begin{figure*}[htbp]

    \begin{center}

        \includegraphics[width=1\linewidth]{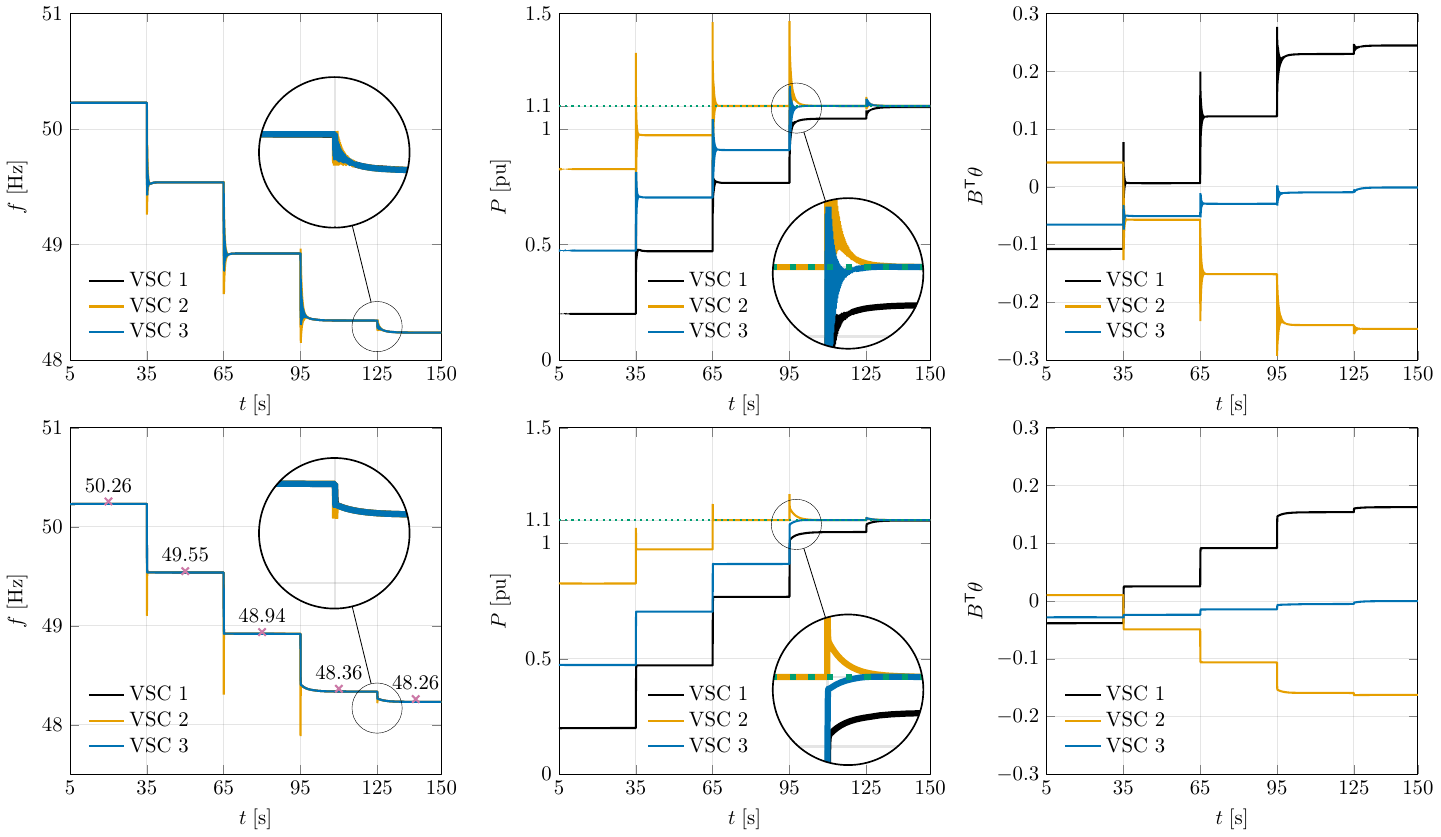}
        \caption{Results of an EMT simulation (top row) and the reduced-order model (bottom row) for the IEEE 9-bus system depicted in \cite[Fig. 3.]{IG2025}. The green line indicates the upper power limit of each VSC and the pink markers indiciate the frequency deviation predicted by \cite[Theorem 3]{IG2025} using the parameters in \cite[Table I]{IG2025}.\label{fig:freqresponse}}
        \hspace*{-2em}
          \makebox[\textwidth][c]{%
        \includegraphics[width=1.23\linewidth]{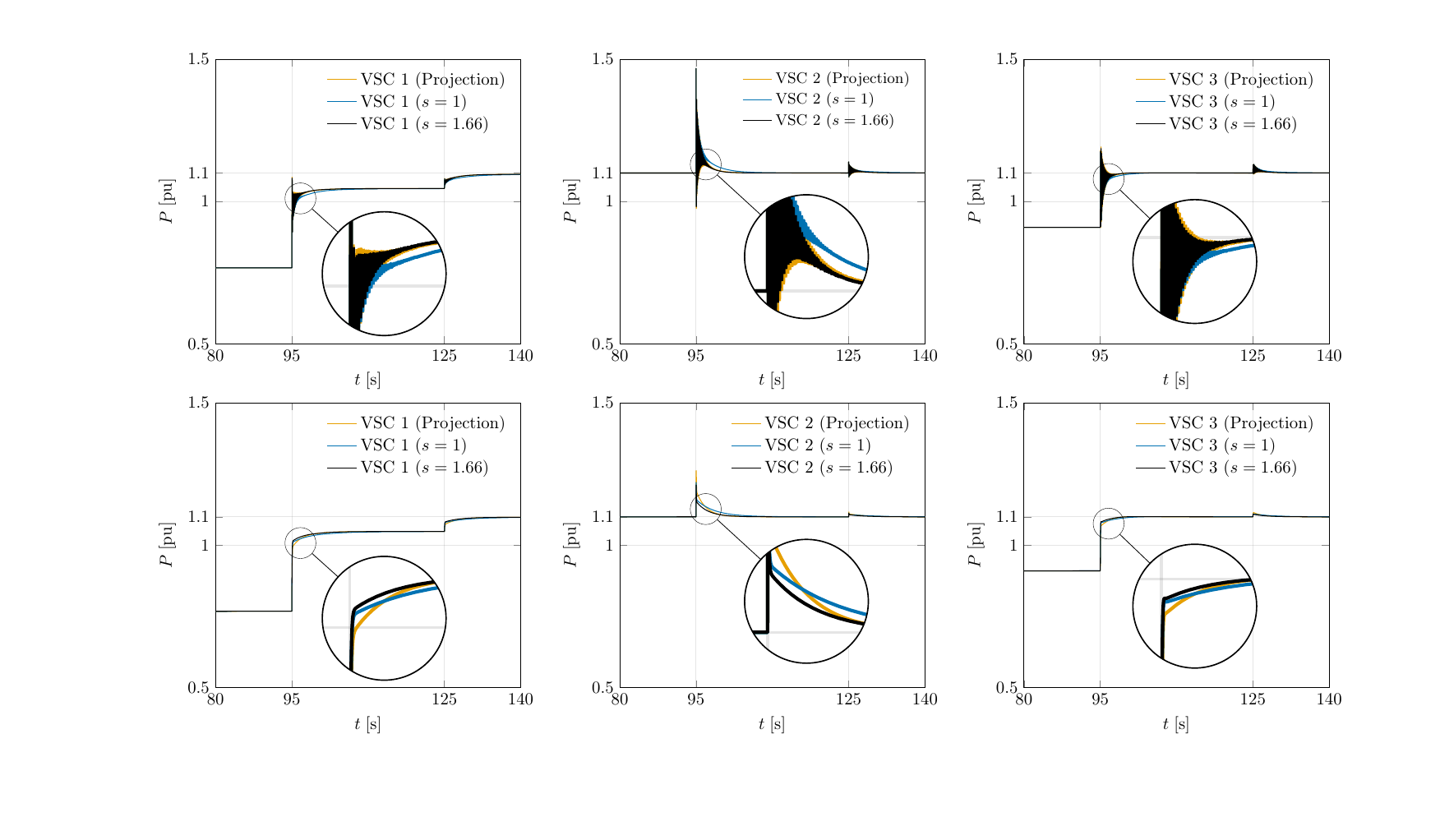}
          }
          \vspace*{-4em}
        \caption{Comparison of an EMT simulation (top row) and the reduced-order model (bottom row) of the IEEE 9-bus using the projection-based and projection-free dynamics for different values of scaling factor $s$.\label{fig:comparison}}
    \end{center}
\end{figure*}

\section{Conclusion and Outlook}\label{sec:conclusion}
In this paper, we studied a constrained network flow problem that aims to minimize the deviation of grid-forming power injections from given references under power limits. Applying primal-dual dynamics to the constrained flow problem in its original coordinates results in dynamics that cannot be implemented using only local measurements. We investigated novel projection-free networked dynamics that leverage local measurements of the power injections to solve the constrained flow problem. We showed that, in edge coordinates, the networked dynamics coincide with projection-free primal-dual dynamics of the constrained flow problem. Then we established that the networked dynamics are semi-globally asymptotically stable with respect to the set of the KKT points of the constrained flow problem in the original nodal coordinates. Leveraging our theoretical results, we (i) provide a bound on the convergence rate of the projection-free networked dynamics, (ii) study the dependence of convergence rate of the networked dynamics on the control gains and introduce a method to scale given controller gains to improve the convergence rate and (iii) analyze the relationship of the bound on the convergence rate and connectivity of the network. Particularly, converters share additional load according to their droop coefficients up to their power limit. Finally, the analytical results are illustrated using an EMT simulation. While these results are encouraging, future work should consider a wider range of constraints typically encountered in the power system context such as converter dc voltage, ac current limits and line flow limits.

\section{Appendix}\label{sec:appendix}
\begin{proof}[Proof of Theorem~\ref{thm:nodal_exp_stability}]
Definition~\ref{def:semiglobal} does not restrict the initial condition and we consider nonnegative initial values $\lambda(0), \mu(0) \geq 0$. In this case, the dual multipliers remain in nonnegative orthant~\cite[Proposition 1]{TQL2020}. Next, we can establish the stability of the dynamical system in original coordinates. Notably, there exist $\underline{\kappa} \in \mathbb{R}_{\geq 0}$ and $\overline{\kappa} \in \mathbb{R}_{\geq 0}, \forall t \geq 0$ such that 
\begin{align}\label{eq:normbound}
    \underline{\kappa} \norm{T_\eta \varphi_\theta(t,\xi_0)}_{\mathcal{V}_{\eta}} \!\leq\! \norm{\varphi_\theta(t,\xi_0)}_{\mathcal{V}_{\theta}}  \!\leq\! \overline{\kappa} \norm{T_\eta \varphi_\theta(t,\xi_0)}_{\mathcal{V}_{\eta}}
\end{align}
By \cite[Theorem 1]{IG2025}, it follows that the left inequality holds with $\underline{\kappa} = \norm{T_{\eta}}^{-1}$. Next, let 
    $\sigma_{\mathbbl{1}}\coloneqq \min\nolimits_{\theta \perp \mathbbl{1}_n, \norm{\theta}=1} \norm{VB^\mathsf{T}\theta} \in \mathbb{R}_{>0}$.
Then, the upper bound of \eqref{eq:normbound} holds with $\overline{\kappa} = \frac{1}{\min\{1, \sigma_{\mathbbl{1}}\}}.$ By Proposition~\ref{prop:nodal_exp_stability}, \eqref{eq:primaldualedge} is semi-globally exponentially stable on $\mathbb{R}^{e} \times \mathbb{R}^{2n}_{\geq0}$ with respect to $\mathcal{V}_{\eta}$. Using Proposition~\ref{prop:nodal_exp_stability} and the upper bound of \eqref{eq:normbound}, we have
\begin{align*}
    \norm{\varphi_\eta(t,T_\eta \xi_0)}_{\mathcal{V}_{\eta}} \leq \mathcal{M}_{\beta} \cdot e^{-\beta t}\norm{\varphi_\eta(0,T_\eta \xi_0)}_{\mathcal{V}_{\eta}}
\end{align*}
where 
    $\norm{\varphi_\theta(t,\xi_0)}_{\mathcal{V}_{\theta}} \leq \overline{\kappa}  \norm{\varphi_\eta(t, T_\eta \xi_0)}_{\mathcal{V}_{\eta}}$ 
and $\mathcal{M}_{\beta}$ is a constant depending on the decaying rate $\beta$. Moreover, we have $\norm{\varphi_\eta(0, T_\eta \xi_0)}_{\mathcal{V}_{\eta}} \leq \underline{\kappa}^{-1} \norm{\varphi_\theta(0, \xi_0)}_{\mathcal{V}_{\theta}}$. Finally, we conclude
\begin{align*}
    \norm{\varphi_\theta(t,\xi_0)}_{\mathcal{V}_{\theta}} \leq \overline{\kappa}\cdot\underline{\kappa}^{-1} \cdot \mathcal{M}_{\beta} \cdot e^{-\beta t}\norm{\varphi_\theta(0, \xi_0)}_{\mathcal{V}_{\theta}}.
\end{align*}
In other words, \eqref{eq:newplimdroopsc} is semi-globally exponentially stable on $\mathbb{R}^n \times \mathbb{R}^{2n}_{\geq 0}$ with respect to the set $\mathcal{V}_{\theta}$.
In addition, we show that $\lim_{t \to \infty} \omega(t) = \mathbbl{1}_n \omega_{s}$. According to Proposition~\ref{prop:nodal_exp_stability}, any pair $\left(\eta^\star, \lambda^\star\right)$ converges to a KKT point $\left(\eta^\star, \lambda^\star \right) \in \mathcal{V}_{\eta}$, i.e., $\lim_{t \to \infty} \eta(t) = \eta^\star$ and $\lim_{t \to \infty} \ddt \eta=\mathbbl{0}_e$. Using $\eta = V B^\mathsf{T} \theta$, we obtain
\begin{align*}
   \lim_{t \to \infty} \ddt V B^\mathsf{T} \theta(t) = \lim_{t \to \infty} V B^\mathsf{T} \ddt \theta(t) = \lim_{t \to \infty} VB^\mathsf{T} \omega(t) = 0
\end{align*}
and $\lim_{t \to \infty} \omega(t) = \mathbbl{1}_n \omega_{s}$ follows $\ker(B^\mathsf{T})=\vspan(\mathbbl{1}_n)$ to conclude the proof.
\end{proof}

The following corollary directly follows from \cite[Theorem 1]{TQL2020} and the proof of Theorem~\ref{thm:nodal_exp_stability}.
\begin{corollary}[\textbf{Semi-global exponential convergence}]\label{corr:rate}
Let $\beta \in \mathbb{R}_{>0}$ denote any strictly positive constant satisfying \eqref{eq:beta_bound}. Then, there exists $\mathcal{M}_{\beta} \in \mathbb{R}_{>0}$ such that $\lim_{\beta \to 0^+} \mathcal{M}_{\beta} = 1$ and
\begin{align*}
    \norm{\varphi_\theta(t,\xi_0)}_{\mathcal{V}_{\theta}} \leq \overline{\kappa}\cdot\underline{\kappa}^{-1} \cdot \mathcal{M}_{\beta} \cdot e^{-\beta t}\norm{\varphi_\theta(0, \xi_0)}_{\mathcal{V}_{\theta}}.
\end{align*}
\end{corollary}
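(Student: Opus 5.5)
The plan is to obtain the corollary by reading \cite[Theorem~1]{TQL2020} quantitatively: that result gives not only semi-global exponential stability but an explicit decay estimate. We then carry this estimate to nodal coordinates through the norm-equivalence argument from the proof of Theorem~\ref{thm:nodal_exp_stability}.

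\textbf{Step 1 (edge coordinates).} First, we check that \eqref{eq:pfangledifference} meets the hypotheses of \cite[Theorem~1]{TQL2020}. The objective \eqref{eq:pfangledifference:objective} is convex quadratic with gradient $VB^\mathsf{T}M(BV\eta + P_L - P^\star)$, which is Lipschitz. The constraints are affine with Jacobian $\mathcal{J}=K_IBV$. LICQ holds at the optimizers by Assumption~\ref{assum:LICQ}, and feasibility follows from Proposition~\ref{prop:feas}. Next, we map the abstract constants of \cite{TQL2020} to the quantities of Section~\ref{sec:stability_analysis}:
\begin{itemize}
\item the strong-convexity constant on the complement of the kernel becomes $\alpha$;
\item the Lipschitz constant of the gradient becomes $\gamma$, using Lemma~\ref{lem:laplacian_eigenvalues};
\item the constraint-matrix bounds become $\mathcal{S}$ and $\mathcal{F}$, since $\norm{K_IBV}^2\le \lambda_{\max}(K_I L K_I)\le 2k_{\max}w_{\max}d_{\max}$ and $\norm{K_IBV}_F^2\le 2k_{\max}\sum_j w_j$;
\item the active-constraint conditioning becomes $\kappa$;
\item the strict-slack margin of inactive constraints becomes $\delta_{\min}$;
\item the trajectory-dependent constant becomes $\mathcal{M}$ in \eqref{eq:M_theta}, with $\nu_0$ the initial distance to $\mathcal{V}_\eta$.
\end{itemize}
With this dictionary, the rate conditions of \cite[Theorem~1]{TQL2020} become \eqref{eq:beta_constant} and \eqref{eq:beta_cubic}. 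For any $\beta>0$ satisfying \eqref{eq:beta_bound}, we then obtain $\norm{\varphi_\eta(t,\cdot)}_{\mathcal{V}_\eta}\le \mathcal{M}_\beta e^{-\beta t}\norm{\varphi_\eta(0,\cdot)}_{\mathcal{V}_\eta}$.

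\textbf{Step 2 (the constant $\mathcal{M}_\beta$).} In \cite{TQL2020} the estimate comes from a quadratic Lyapunov function whose cross terms are weighted by a coefficient proportional to $\beta$. The function is therefore sandwiched between $(1\mp c\beta)\norm{\cdot}^2_{\mathcal{V}_\eta}$. This gives $\mathcal{M}_\beta=\sqrt{(1+c\beta)/(1-c\beta)}$, which tends to $1$ as $\beta\to0^+$. We take this explicit form directly from the construction.

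\textbf{Step 3 (back to nodal coordinates).} By Proposition~\ref{prop:equivalency}, $T_\eta\varphi_\theta=\varphi_\eta\circ T_\eta$. Combining Step~1 with the two-sided bound \eqref{eq:normbound}, where $\underline{\kappa}=\norm{T_\eta}^{-1}$ and $\overline{\kappa}=1/\min\{1,\sigma_{\mathbbl{1}}\}$, gives
\[
\norm{\varphi_\theta(t,\xi_0)}_{\mathcal{V}_\theta}\le\overline{\kappa}\norm{\varphi_\eta(t)}_{\mathcal{V}_\eta}\le \overline{\kappa}\mathcal{M}_\beta e^{-\beta t}\norm{\varphi_\eta(0)}_{\mathcal{V}_\eta}\le\overline{\kappa}\,\underline{\kappa}^{-1}\mathcal{M}_\beta e^{-\beta t}\norm{\varphi_\theta(0,\xi_0)}_{\mathcal{V}_\theta}.
\]

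\textbf{Main obstacle.} Step~1 is the hard part. We must verify that the paper's explicit constants ($\mathcal{S}$, $\mathcal{F}$, $\alpha$, $\gamma$, $\kappa$, $\delta_{\min}$, $\mathcal{M}$) dominate the corresponding constants in \cite{TQL2020}. Two points need care here. First, the objective is only strongly convex modulo $\ker(BV)$, so $\alpha=m_{\min}\lambda^+_{\min}(L)$ must be justified on $\operatorname{range}(VB^\mathsf{T})$; the $\eta$-dynamics evolve there, since $\ddt\eta\in\operatorname{range}(VB^\mathsf{T})$. Second, the bound on $\norm{\mu^\star}$ from Lemma~\ref{lem:mubound} must enter $\mathcal{M}$ uniformly in the initial data when $\nu_0\le h$.
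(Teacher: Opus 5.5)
This is essentially the paper's argument. The paper also obtains the edge-coordinate estimate from \cite[Theorem~1]{TQL2020}, handling the lack of strong convexity by restricting to $\operatorname{range}(VB^\mathsf{T})$; this is exactly the split $\eta=\Gamma_+\gamma_+ + \Gamma_0\gamma_0$ with constant $\gamma_0$ in the proof of Proposition~\ref{prop:nodal_exp_stability}. It then transfers the estimate to nodal coordinates via Proposition~\ref{prop:equivalency} and \eqref{eq:normbound}, as in your Step~3; like the paper, you defer to \cite{TQL2020} for checking that the stated constants dominate those of \cite[Theorem~1]{TQL2020} and for the limit of $\mathcal{M}_\beta$.

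Your second ``obstacle'' does not arise: $\mathcal{M}$ in \eqref{eq:M_theta} involves the fixed quantity $\norm{\mu^\star}$, which is unique under Assumption~\ref{assum:LICQ}, not the bound of Lemma~\ref{lem:mubound}. That lemma is only needed later, in Proposition~\ref{prop:connectivity}.
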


\begin{proof}[Proof of Theorem~\ref{thm:controlgainconvergence}]
        Let $\zeta_{\beta} = \frac{\alpha}{\beta} - \frac{16\beta^2}{\kappa}, \zeta_{1} = \frac{4 \mathcal{F}^2}{\kappa}$, and $\zeta_2 = \frac{4}{\kappa}\big((\gamma + \mathcal{M})(\alpha + \mathcal{M} + \frac{1}{\rho}) + \frac{1}{2\rho^2}\big)$. Under the conditions of Corollary~\ref{corr:rate}, it holds that $\zeta_{\beta} \geq \zeta_{1} + \zeta_{2} + 1$. Notably, $\zeta_\beta$ is an increasing function with respect to $\kappa$, i.e., $\frac{\partial \zeta_{\beta}}{\partial \kappa} = \frac{16\beta^2}{\kappa^2} > 0$. Scaling the gains $k_i^{\prime} = s k_i$, we obtain $\mathcal{F}^\prime  = \sqrt{s_{k}} \mathcal{F}, \kappa^\prime =  s \kappa$ and $\mathcal{S} ^\prime = \sqrt{s} \mathcal{S}$ respectively. In other words, $\zeta_{\beta}$ increases and $\frac{\partial \zeta_1}{\partial s} = \frac{\partial}{\partial s} \frac{4{\mathcal{F}^\prime}^2}{{\kappa^\prime}^2} = 0$. Therefore, it is sufficient to show that $\zeta_2$ decreases under scaling by $s$. Expanding $\zeta_{2}$ results in 
        \begin{align*}
            \zeta_{2} = 4\bigg({\frac{\gamma \alpha}{\kappa}} + {\frac{(\gamma + \alpha) \mathcal{M}}{\kappa}} + {\frac{\gamma}{\kappa \rho}} + {\frac{\mathcal{M}(\mathcal{M} + \frac{1}{\rho})}{\kappa}} + {\frac{1}{2\kappa\rho^2}}\bigg)\!.
        \end{align*}
        Note that $\frac{\partial}{\partial s} \frac{\gamma \alpha}{\kappa^\prime} < 0$ and $\frac{\partial}{\partial s} \frac{\gamma}{\kappa^\prime \rho^\prime} < 0$, i.e., the first and third term decrease as $s$ increases. Moreover, the last term is unchanged since $\frac{\partial}{\partial s} \frac{1}{\kappa^\prime {\rho^\prime}^2} = 0$.
        Next, we show that the second term of $\zeta_2$ decreases and the fourth term of $\zeta_2$ is unchanged as $s$ increases. For brevity consider $\mathcal{M}^{\prime} = \mathcal{M}(\rho^\prime, \mathcal{F}^{\prime}, \mathcal{S}^{\prime})$, it results in
        \begin{align*}
            \frac{\partial}{\partial s} \!\frac{\mathcal{M}^{\prime}}{\kappa^\prime} \!=\! \frac{\partial}{\partial s}\frac{\rho(\mathcal{F}^2 + \mathcal{F} \mathcal{S} \nu_{0})}{\sqrt{s} \kappa} + \!
            \frac{\partial}{\partial s} \frac{\mathcal{S}(\nu_{0} + \|\mu^\star\|)}{\sqrt{s}\kappa} < \!0.
        \end{align*}
        Additionally, for the fourth term
        \begin{align*}
            \frac{\partial}{\partial s}\frac{{\mathcal{M}^{\prime}}^2}{\kappa^\prime} &= \frac{\partial}{\partial s}\frac{{\rho}^2(\mathcal{F}^2 + \mathcal{S} \mathcal{F} \nu_{0})^2}{\kappa} + \frac{\partial}{\partial s}\frac{\mathcal{S}^2(\nu_{0}+\|\mu^\star\|)^2}{ \kappa}\\
            &+ \frac{\partial}{\partial s}\frac{2\rho(\mathcal{F}^2 + \mathcal{S} \mathcal{F} \nu_{0})\mathcal{S}(\nu_{0} + \|\mu^\star\|)}{\kappa} = 0
        \end{align*}
        and
        \begin{align*}
            \frac{\partial}{\partial s}\frac{\mathcal{M}^{\prime}}{\kappa^\prime \rho^\prime} &= \frac{\partial}{\partial s} \frac{\rho \mathcal{F}^2 + \mathcal{S}(\rho \mathcal{F} \nu_{0} + \nu_{0} + \|\mu^\star\|)}{\kappa\rho} \\
            &= \frac{\partial}{\partial s} \frac{\mathcal{F}^2}{\kappa} + \frac{\partial}{\partial s} \frac{\mathcal{S} \mathcal{F} \nu_{0}}{\kappa} + \frac{\partial}{\partial s} \frac{\mathcal{S}(\nu_{0} + \|\mu^\star\|)}{\kappa \rho} = 0
        \end{align*}
    
       We conclude that $\frac{\kappa\alpha}{4\beta} - 4\beta^2$ increases for any increasing $s > 1$, i.e., $\frac{\partial \zeta_2}{\partial s} < 0 $. Moreover, recalling $\beta \leq \frac{\kappa\delta_{\min}}{46\rho \mathcal{F}^2}$  and the definition of $\delta_{\min}$ we have 
        \begin{align*}
            \delta_{\min}(\rho, k_i) &\coloneq 1 - \bigg[1+\rho \cdot \frac{\max\limits_{i \in \mc N \setminus (\mathcal{I}_u\cup \mathcal{I_\ell})}{\sqrt{k_i} g_i(\eta^\star)}}{\nu_{0}}\bigg]_{+}^2.
        \end{align*}
        This implies that $\frac{\partial}{\partial s} \frac{\kappa^\prime \delta_{\min}(\rho^\prime, k_i^\prime)}{46\rho^\prime {\mathcal{F}^\prime}^2} = \frac{\partial}{\partial s} \frac{\sqrt{s}\kappa \delta_{\min}(\rho, k_i)}{46\rho \mathcal{F}^2} > 0$.
        Due to enlargement of both bounds on $\beta$ in Corollary~\ref{corr:rate}, irrespective of the dependency of $M_{\beta}$ on $\beta$, we conclude that $\beta(k_i, \rho) < \beta(k^\prime_{i}, \rho^\prime)$.
\end{proof}

\begin{proof}[Proof of Proposition~\ref{prop:equivalency}]
Using the change of variables $\mu = K_I \lambda$ and \cite[Lemma 2]{IG2025}, the networked dynamics \eqref{eq:vector_net_nodal_orig} can be written as
\begin{subequations}\label{eq:vector_net_nodal}
    \begin{align}
        \ddt \theta =& M(P^\star  -L\theta - P_{L})\\ &- (\Xi \otimes K_I) \Pi_{\mathbb{R}^{2n}_{\geq 0}}\left(\rho (I_2 \otimes K_{I}) g(L\theta) + \mu\right)\nonumber\\
        \rho \ddt \mu =& \Pi_{\mathbb{R}^{2n}_{\geq 0}}\big(\rho (I_2 \otimes K_I) g(L\theta) + \mu\big) - \mu
   \end{align}
\end{subequations}
Recalling the definition $g(P_N)$ and using edge coordinate $\eta = VB^\mathsf{T}\theta$, the network dynamics in edge coordinates become identical to \eqref{eq:primaldualedge}.
Moreover, since $\ddt \eta  \in \Ima VB^\mathsf{T}$ holds for \eqref{eq:ddteta} and any initial condition $\eta_0 \in \Ima VB^\mathsf{T}$, it holds that $\eta \in \Ima VB^\mathsf{T}$ for all $t\in \mathbb{R}_{\geq 0}$. For any initial condition $(\theta_0, \mu_0)$ trajectories of \eqref{eq:vector_net_nodal} mapped to the edge coordinates coincide with trajectories of \eqref{eq:primaldualedge} with the initial condition $(VB^\mathsf{T}\theta_0, \mu_0)$. We conclude that the projection-free networked dynamics \eqref{eq:newplimdroopsc} mapped to edge coordinate coincide with the primal-dual dynamics associated with \eqref{eq:auglagrangian} for any initial condition $\eta_{0} \in \Ima(VB^\mathsf{T})$.
\end{proof}

\begin{proof}[Proof of Lemma~\ref{lem:laplacian_eigenvalues}]
Since $\mathbbl{1}_{n} \in \ker(L)$, the sum of absolute value of off-diagonal terms of each row of the Laplacian $L$ is equal to its corresponding diagonal term. Thus Gershgorin's circle theorem implies that the eigenvalues of the Laplacian are located in a union of the closed discs, i.e., $\lambda_i \in \bigcup_{i = 1}^{n} \mathbbl{D}(\sum_{k} w_{(i,k)}, \sum_{k} w_{(i,k)})$. Next, let $w_{\Sigma, i} = \sum_{k} w_{(i,k)}$ and note that
 $ \mathbbl{D}(w_{\Sigma, i}, w_{\Sigma, i}) \subseteq \mathbbl{D}(\max\limits_{i}(w_{\Sigma, i}), \max\nolimits_{i}(w_{\Sigma, i})$.
Thus, using $\max\nolimits_{i}(\sum_{k} w_{(i,k)}) \leq w_{\max} d_{\max}$, the upper bound is obtained. To establish the lower bound, note that $\lambda_{\max} (L) \geq w_{\min}\lambda_{\max}(B^\mathsf{T}B)$. It suffices to show that $\lambda_{\max}(B^\mathsf{T}B) \geq 1+d_{\max}$. To this end, by the Rayleigh quotient, we have
\begin{align*}
    \lambda_{\max}(B^\mathsf{T}B) \geq x^\mathsf{T} B^\mathsf{T} B x \quad \forall \|x\|=1
\end{align*}   
where $x^\mathsf{T} B^\mathsf{T} B x = \sum\nolimits_{(i,j) \in \mathcal{E}}(x_i - x_j)^2$. Next, let $\Delta \coloneqq  \frac{1}{\sqrt{d_{\max}^2 + d_{\max}}}$ and $x= \Delta  \begin{bmatrix}
  d_{\max} &
  -1 &
  \dots&
  -1 &
  0 & 
    \dots&
    0
\end{bmatrix}$.
The Proposition follows by noting that $x$ has $d_{\max}$ number of elements equal to $-\Delta$ and $n - d_{\max} - 1 $ of zero elements.
\end{proof}
\begin{proof}[Proof of Lemma~\ref{lem:mubound}]At any KKT point $\omega_i=\omega_s$ and either $\mc I_u=\emptyset$ or $\mc I_\ell=\emptyset$~\cite{IG2025}. Without loss of generality assume that $\mc I_\ell=\emptyset$ and $\lambda^\star_\ell=\mathbbl{0}_n$. Then, for all $i \in \mc N$, it holds that
    \begin{align}\label{eq:lambdabound_proj}
    \omega_s =& m_i (P^\star_i-P_i)-k_i \Pi_{{\mathbb{R}}_{\geq 0}}(\rho(P_i-P_{u,i}) + \lambda^\star_{u, i}) 
    \end{align}
Moreover, by primal feasibility any KKT point must satisfy $P_i-P_{u,i} \leq 0$ for all $i \in \mc N$ and, by complementary slackness, $\lambda^\star_{u,i}=0$ if $i \notin \mc I_u$. Thus, we either have $\Pi_{{\mathbb{R}}_{\geq 0}}(\rho(P_i-P_{u,i}) + \lambda^\star_{u, i})=0$ if $i \notin \mc I_u$ or $\Pi_{{\mathbb{R}}_{\geq 0}}(\rho(P_i-P_{u,i}) + \lambda^\star_{u, i})=\lambda^\star_{u, i}$ if $i \in \mc I_u$. Thus, for all $i \in \mc N$, \eqref{eq:lambdabound_proj} reduces to $\omega_s = m_i (P^\star_i-P_i) - k_i \lambda^\star_{u, i}$ and $\frac{k_i}{m_i} \lambda^\star_{u, i} = P^\star_i-P_i - \frac{\omega_s}{m_i}$. Summing over $i \in \mc N$ results in $\sum_{i=1}^n \frac{k_i}{m_i}  \lambda^\star_{u, i} = \sum_{i=1}^n  P^\star_i - \sum_{i=1}^n  P_i - \sum_{i=1}^n  \frac{\omega_s}{m_i}$. Because $P=L\theta+P_L$, it holds that $\sum_{i=1}^n  P_i= \sum_{i=1}^n  P_{L,i}$. Moreover, $\mu^\star_i = \sqrt{k_i} \lambda^\star$ and we obtain $\sum_{i=1}^n \frac{\sqrt{k_i}}{m_i}  \mu^\star_{u, i} = \sum_{i=1}^n  (P^\star_i - P_{L,i}) - \sum_{i=1}^n  \frac{\omega_s}{m_i} = \varrho$. Next, note that $\varrho \in \mathbb{R}_{\geq 0}$ must hold. Thus, it follows that $\frac{\sqrt{k_{\min}}}{m_{\max}}   \sum_{i=1}^n \mu^\star_{u, i} \leq \sum_{i=1}^n \frac{\sqrt{k_i}}{m_i}  \mu^\star_{u, i} = \varrho$. Moroever, because $\mu^\star_{u,i} \geq 0$ it immediately follows that 
$\norm{\mu^\star_{u}} \leq \norm{\mu^\star_{u}}_1 = \sum_{i=1}^n \mu^\star_{u, i} \leq \frac{m_{\max}}{\sqrt{k_{\min}}} \varrho$. The proof for $\mc I_u=\emptyset$ follows from the same steps.
\end{proof} 
\begin{proof}[Proof of Proposition~\ref{prop:rho_bound}] 
    Since $\frac{\kappa \alpha}{4\beta} - 4\beta^2$ is a decreasing function of $\beta$, any $\rho$ satisfying the conditions of Corollary~\ref{corr:rate} also satisfies
     $   \frac{23}{2}\kappa \alpha \rho^3 - c_1 \rho^2 - \gamma \rho - \frac{531}{1058} > 0.$
    where $c_1 = \mathcal{F}^2 + \frac{\kappa}{4} + \gamma \alpha$. Therefore, since $\gamma \rho + \frac{531}{1058} > 0$ we have $\frac{23}{2}\kappa \alpha \rho^3 > c_1 \rho^2$ and $\rho > \frac{2c_1}{23\kappa \alpha}$. Moreover, an upper bound on the solution is given by 
     $   \rho \leq 1 + \frac{1}{23\kappa\alpha}\max\{2c_1, 2\gamma, \frac{1062}{1058}\}.$
    The result follows from feasibility of the optimal $\rho^\star$.
    \end{proof}
\begin{proof}[Proof of Proposition~\ref{prop:connectivity}]
    Adding an edge between nodes $(i,j)$ increases the connectivity $\lambda_{\min}^+(L)$. In turn, this results in increased $\alpha$. Let $f(\alpha, \beta) \coloneqq \frac{\kappa\alpha}{4\beta} - 4\beta^2 $, and $f_2 = \mathcal{F}^2 + \frac{\kappa}{4} + (\gamma + \mathcal{M})(\alpha + \mathcal{M} + \frac{1}{\rho}) + \frac{1}{2\rho^2}$. Then, \eqref{eq:beta_cubic} can be written as $f(\alpha, \beta) \geq f_2$.  Moreover, $f(\alpha, \beta) \geq f_2$ is satisfied for all $\beta \leq \min\{\frac{\kappa \delta_{\min}}{46\mathcal{F}^2}, \arg_{\beta} (f(\alpha, \beta) = f_2)\}$.
    Taking the derivative of $f(\alpha, \beta) \geq f_2$ with respect to $\alpha$ and rearranging the resulting inequality results in $\beta \leq \frac{\kappa}{4 (\gamma + \mathcal{M})}$. If \eqref{eq:beta_constant} is binding, it is sufficient to have $\rho^\prime \geq \rho$. Otherwise,  \eqref{eq:beta_cubic} is binding if $\frac{\kappa}{4 (\gamma + \mathcal{M})} \leq \frac{\kappa \delta_{\min}}{46\rho\mathcal{F}^2}$. Therefore it results in $46\rho \mathcal{F}^2 \leq \delta_{\min} 4 (\gamma + \mathcal{M})$. Therefore it is sufficient to have
        \begin{align*}
            \tfrac{46}{4}\rho\mathcal{F}^2 \leq \gamma + \rho \mathcal{F}^2 + \mathcal{S}\left(\rho \mathcal{F} \nu_{0} + \nu_{0} + \tfrac{m_{\max}}{\sqrt{k_{\min}}}\varrho\right).
        \end{align*}
    For this inequality to hold, $\rho$ has to satisfy 
        \begin{align*}
            \rho\left(1-\frac{2}{21}\frac{\mathcal{S}\nu_{0}}{\mathcal{F}}\right) \leq \tfrac{2}{21}\left(\frac{\gamma}{\mathcal{F}^2} + \frac{\mathcal{S}\nu_{0}}{\mathcal{F}^2} + \frac{m_{\max}}{\sqrt{k_{\min}}\mathcal{F}^2}\varrho\right)
        \end{align*}
        In addition, $\frac{\gamma}{\mathcal{F}^2} \leq \frac{m_{\max} w_{\max}}{k_{\max} w_\Sigma}$, $\frac{\mathcal{S}}{\mathcal{F}^2} = \frac{\sqrt{ w_{\max}}}{\sqrt{k_{\max}d_{\max}} w_\Sigma}$, and $\frac{\mathcal{S}}{\mathcal{F}}\nu_{0} = \sqrt{\frac{w_{\max}}{w_\Sigma}}\nu_{0}$. Therefore for any $\rho$ satisfying the above inequality, the $f(\alpha, \beta) - f_2$ is an increasing function of $\alpha$.
\end{proof}

\begin{proof}[Proof of Proposition~\ref{prop:nodal_exp_stability}]
     We begin by noting that $M \in \mathbb{S}^n_{\succ 0}$. Then, by \cite[Observation~7.1.8]{HJ2013}, $B^\mathsf{T} M B \in \mathbb{S}^{e \times e}_{\succ 0}$ if and only if $\rank{B}=e$. If $\mc G$ is a connected tree, then $n=e+1$ and by \cite[Lem. 9.2]{LNS}, $\rank{B}=e$. Conversely, if $\mc G$ contains cycles, then $e \geq n$ and $\rank{B} \leq e-1$. Thus, if $\mc G$ is a tree, then the cost function of \eqref{eq:pfangledifference} is strongly convex and $\mathcal{V}_{\eta}$ is a singleton. Moreover, by \cite[Prop.~3]{IG2025} there exists $\eta$ such that $P_\ell < B V \eta + P_L < P_u$, i.e., Slater's condition holds. Then, by Assumption~\ref{assum:LICQ}, \cite[Thm.~1]{TQL2020} immediately implies that \eqref{eq:primaldualedge} is semi-globally exponentially stable with respect to $\mathcal{V}_{\eta}$. 
        
When $\mc G$ contains cycles, we can decompose \eqref{eq:pfangledifference} into a strongly convex part and convex part. Similarly, the dynamics \eqref{eq:primaldualedge} can be decomposed into an semi-globally exponentially stable part and Lyapunov stable part. To this end, let $\Gamma \coloneqq \begin{bmatrix} \Gamma_{+} & \Gamma_{0} \end{bmatrix}$ where $\Gamma_{+}  \in \mathbb{R}^{e \times n-1}$ contains eigenvectors corresponding to the positive eigenvalues of $V B^\mathsf{T} M B V$ and $\Gamma_{0}  \in \mathbb{R}^{e \times e-(n-1)}$ contains the eigenvectors corresponding to the zero eigenvalues. Next, let $\gamma = (\gamma_+,\gamma_0) \in \mathbb{R}^{e}$. Since $B^\mathsf{T} M B \in \mathbb{S}^n_{\succeq 0}$, we conclude that $\Gamma^{-1} =  \Gamma^\mathsf{T}$. Applying the change of coordinates $\eta = \Gamma \gamma$ to \eqref{eq:pfangledifference} results in
\begin{subequations}\label{eq:pfangledifferenceplus}
    \begin{align}
        &\min_{\gamma_+} \tfrac{1}{2} \norm{\gamma_+}^2_{H} + c^\mathsf{T} \gamma_+ 
        \label{eq:pfangledifferenceplus:obj}\\ 
        & \text{s.t. }   K_I P_{\ell} \leq K_I (A \gamma_+ + P_L)  \leq K_I  P_u,
        \label{eq:pfangledifferenceplus:constraint}
    \end{align}
\end{subequations}
where $H\coloneqq\Gamma_+^\mathsf{T} VB^\mathsf{T}MBV \Gamma_+$, $c\coloneqq\Gamma_+^\mathsf{T}VB^{\mathsf{T}}M(P_L\!-\!P^\star)$,  and $A\coloneqq BV \Gamma_+$. Notably, this transformation only removed redundant degrees of freedom and, by construction, \eqref{eq:pfangledifferenceplus} is strongly convex and strictly feasible under the same conditions as \eqref{eq:pfangledifference}. Moreover, since the transformation $\Gamma$ is invertible, by the invariant subspace principle, optimizers of \eqref{eq:pfangledifferenceplus} inherit LICQ from optimizers of \eqref{eq:pfangledifference}.

Moreover, given a KKT point $(\gamma^\star_+,\lambda^\star)$ of \eqref{eq:pfangledifferenceplus}, $BV \Gamma_0 \in \mathbb{R}^{n \times e-(n-1)}$ implies that $(\Gamma_+ \gamma^\star_+ + \Gamma_0 \gamma_0,\lambda^\star) \in \mathcal{V}_{\eta}$ for all $\gamma_0 \in \mathbb{R}^{e-(n-1)}$. Applying the change of coordinates $\eta = \Gamma \gamma$ to \eqref{eq:primaldualedge} results in $\ddt \gamma_0 =0$ and 
\begin{subequations}\label{eq:primaldualedgeplus}
\begin{align}
    \!\!\ddt \gamma_+=&-\!H \gamma_+ \!-\!c \!-\! (K_{I}A)^\mathsf{T} \big(\Pi_{\mathbb{R}^{n}_{\geq 0}}(\rho K_{I}g_{1}(A \gamma_+) + \mu_{\ell}) \nonumber\\
    & +  \Pi_{\mathbb{R}^{n}_{\geq 0}}(\rho K_{I}g_{2}(A \gamma_+) + \mu_{u}) \big),  \\ 
    \ddt \mu_{\ell} =& \frac{1}{\rho} \big(\Pi_{\mathbb{R}^n_{\geq 0} }\left(\rho K_I g_\ell(A\gamma_+) + \mu_{\ell}\right) - \mu_{\ell}\big),\\
    \ddt \mu_{u} =& \frac{1}{\rho} \big(\Pi_{\mathbb{R}^n_{\geq 0} }\left(\rho K_I g_u(A\gamma_+) + \mu_{u}\right) - \mu_{u}\big).
\end{align}
\end{subequations}
Notably, \eqref{eq:primaldualedgeplus} corresponds to primal-dual dynamics associated with the augmented Lagrangian of \eqref{eq:pfangledifferenceplus}. Thus, by \cite[Theorem 1]{TQL2020}, the dynamics \eqref{eq:primaldualedgeplus} are semi-globally exponentially stable with respect to a KKT point $(\gamma^\star_+,\lambda^\star)$ of \eqref{eq:pfangledifferenceplus}. In other words, \eqref{eq:primaldualedge} can be decomposed into dynamics that are semi-globally exponentially stable with respect to $(\gamma^\star_+,\lambda^\star)$ and a constant $\gamma_0 \in \mathbb{R}^{e \times e-(n-1)}$. Since $(\eta,\lambda) = (\Gamma_+ \gamma_+ + \Gamma_0 \gamma_0,\lambda) \in \mathcal{V}_{\eta}$ for any $\gamma_0$, it follows that \eqref{eq:primaldualedge} is semi-globally exponentially stable with respect to $\mathcal{V}_{\eta}$. The last statement of the Theorem follows by noting that $\ddt (\gamma_+,\lambda)=\mathbbl{0}_{3n-1}$ when $(\gamma_+,\lambda)=(\gamma^\star_+,\lambda^\star)$ and $\ddt \gamma_0=0$.
Moreover, (i) there exist upper bounds on the Frobenius and spectral norm of the Jacobian matrix of the constraints quantified in Lemma~\ref{prop:MgLg}, and (ii) when $\mathcal{G}$ is a tree, \eqref{eq:pfangledifference} is strongly convex, and if $\mathcal{G}$ is not a tree, then \eqref{eq:pfangledifferenceplus} is strongly convex. The proof is concluded by using the bounds in Lemma~\ref{lem:laplacian_eigenvalues} and Lemma~\ref{prop:MgLg} to establish the various constants used in the convergence bound in \cite[Theorem 1]{TQL2020}.
\end{proof}
\begin{lemma}\label{prop.lipschitz}
    The function $\tfrac{1}{2} \norm{\gamma_+}^2_{H} + c^\mathsf{T} \gamma_+$ is $\alpha$-strongly convex and its gradient is $\gamma$-Lipschitz. Let $\underline{\alpha} =  m_{\min} \lambda_{\min}^{+}(L)$ and $\overline{\gamma} = 2w_{\max} m_{\max}d_{\max}$, then it holds that $\underline{\alpha} \leq \alpha \leq \gamma \leq \overline{\gamma}$.
\end{lemma}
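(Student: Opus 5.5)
The function is a quadratic with Hessian $H=\Gamma_+^\mathsf{T} VB^\mathsf{T}MBV\Gamma_+$. Its strong convexity modulus is therefore $\alpha=\lambda_{\min}(H)$ and the Lipschitz constant of its gradient is $\gamma=\lambda_{\max}(H)$, so $\alpha\le\gamma$ is trivial. The plan is to identify the spectrum of $H$ with the nonzero spectrum of a nodal matrix and then bound that matrix using $M$ and $L$. By construction of $\Gamma_+$ (orthonormal eigenvectors of $VB^\mathsf{T}MBV$ for its positive eigenvalues), $H$ is diagonal and contains exactly the positive eigenvalues of $VB^\mathsf{T}MBV$. Hence $\alpha=\lambda_{\min}^+(VB^\mathsf{T}MBV)$ and $\gamma=\lambda_{\max}(VB^\mathsf{T}MBV)$.

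Next, write $VB^\mathsf{T}MBV = X^\mathsf{T}X$ with $X=M^{1/2}BV$. The nonzero eigenvalues of $X^\mathsf{T}X$ coincide with those of $XX^\mathsf{T}=M^{1/2}BWB^\mathsf{T}M^{1/2}=M^{1/2}LM^{1/2}$, with multiplicities. So it suffices to bound $\lambda^+_{\min}$ and $\lambda_{\max}$ of $M^{1/2}LM^{1/2}$.

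For the upper bound, use $\lambda_{\max}(M^{1/2}LM^{1/2})\le \|M\|\,\lambda_{\max}(L)=m_{\max}\lambda_{\max}(L)$ and then Lemma~\ref{lem:laplacian_eigenvalues} to get $\gamma\le 2w_{\max}m_{\max}d_{\max}=\overline{\gamma}$.

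The lower bound is the main obstacle, because $\ker(M^{1/2}LM^{1/2})=\vspan(M^{-1/2}\mathbbl{1}_n)$ is not orthogonal to the same vectors as $\ker L$, so the Rayleigh quotient cannot simply be restricted to $\mathbbl{1}_n^\perp$. I would use the Courant--Fischer max-min characterization. Since $\mc G$ is connected, $\lambda^+_{\min}(M^{1/2}LM^{1/2})=\lambda_2(M^{1/2}LM^{1/2})$. For $y\ne0$, set $x=M^{1/2}y$; then
\begin{align*}
\frac{y^\mathsf{T}M^{1/2}LM^{1/2}y}{y^\mathsf{T}y}=\frac{x^\mathsf{T}Lx}{x^\mathsf{T}M^{-1}x}\ge m_{\min}\frac{x^\mathsf{T}Lx}{x^\mathsf{T}x},
\end{align*}
because $x^\mathsf{T}M^{-1}x\le x^\mathsf{T}x/m_{\min}$. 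The map $y\mapsto M^{1/2}y$ is a bijection on two-dimensional subspaces. Taking the max over $2$-dimensional subspaces of the min of the quotient therefore preserves the inequality and gives $\lambda_2(M^{1/2}LM^{1/2})\ge m_{\min}\lambda_2(L)=m_{\min}\lambda^+_{\min}(L)=\underline{\alpha}$. Equivalently, one can cite Ostrowski's theorem for congruences, $\lambda_k(SLS^\mathsf{T})\in[\lambda_{\min}(SS^\mathsf{T}),\lambda_{\max}(SS^\mathsf{T})]\cdot\lambda_k(L)$ with $S=M^{1/2}$, applied at $k=2$ and $k=n$. This yields both bounds at once. Combining the two bounds gives $\underline{\alpha}\le\alpha\le\gamma\le\overline{\gamma}$.
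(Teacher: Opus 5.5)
Your argument is correct, but it does the two reductions in the opposite order from the paper.

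\emph{The paper's order.} After the same identification $\alpha=\lambda^+_{\min}(VB^\mathsf{T}MBV)$ and $\gamma=\lambda_{\max}(VB^\mathsf{T}MBV)$, the paper stays in edge space. It compares $VB^\mathsf{T}MBV$ with the edge Laplacian $L_e=VB^\mathsf{T}BV$ via $m_{\min}I_n\preceq M\preceq m_{\max}I_n$. Only afterwards does it pass to nodal space via $\lambda_{\max}(L_e)=\lambda_{\max}(L)$. That is the same $X^\mathsf{T}X$ versus $XX^\mathsf{T}$ fact you use, applied to $BV$ rather than $M^{1/2}BV$.

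\emph{Your order.} You first move to the nodal matrix $M^{1/2}LM^{1/2}$ and then compare by congruence. The ``main obstacle'' you identify comes from this ordering. In edge space $\ker(VB^\mathsf{T}MBV)=\ker(BV)=\ker(L_e)$, because $M\succ 0$. So the Loewner sandwich, restricted to the common range $\ker(BV)^\perp$, directly gives $\lambda^+_{\min}(VB^\mathsf{T}MBV)\geq m_{\min}\lambda^+_{\min}(L_e)=m_{\min}\lambda^+_{\min}(L)$. No Courant--Fischer or Ostrowski argument is needed there.

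\emph{What each buys.} The paper leaves this lower-bound step implicit and writes out only the $\lambda_{\max}$ sandwich. Your route justifies the lower bound explicitly, and your Ostrowski citation gives both bounds in one step.

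One slip to fix. With eigenvalues in increasing order, the maximum over $2$-dimensional subspaces of the minimum Rayleigh quotient characterizes $\lambda_{n-1}$, not $\lambda_2$. You need either the maximum over $(n-1)$-dimensional subspaces of the minimum, or the minimum over $2$-dimensional subspaces of the maximum. Your pointwise Rayleigh-quotient bound and the subspace bijection $S\mapsto M^{1/2}S$ work unchanged for either correct form. So the conclusion $\lambda_2(M^{1/2}LM^{1/2})\geq m_{\min}\lambda_2(L)$ stands, and the Ostrowski version was already correct as written.
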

    \begin{proof}
        The the function $\tfrac{1}{2} \norm{\gamma_+}^2_{H} + c^\mathsf{T} \gamma_+$ is $\alpha$-strongly convex with $\alpha = \lambda_{\min}(H)$ and its gradient is $\gamma$-Lipschitz with $\gamma = \lambda_{\max}(H)$. Moreover, by construction of $H$ and $\Gamma_+$, it holds that $\alpha = \lambda^{+}_{\min}(V B^\mathsf{T} M BV)$, where $\lambda^{+}_{\min}$ denotes the smallest non-zero eigenvalue, and $\gamma = \lambda_{\max} (VB^\mathsf{T} M B V)$.

    Next, we define the edge Laplacian associated with the Laplacian $L$ as $L_{e} = V B^\mathsf{T} B V$. This results in
    \begin{align*}
 m_{\min} \lambda_{\max} (L_e) &\leq \lambda_{\max} (VB^\mathsf{T} M B V) \leq m_{\max} \lambda_{\max} (L_e).
 \end{align*}
  Using $\lambda_{\max}(L_{e}) = \lambda_{\max}(L)$~\cite{ZM2011} and Lemma~\ref{lem:laplacian_eigenvalues} results in
\begin{align*}
    m_{\min}w_{\min}(1+ d_{\max}) \leq \gamma \leq 2 m_{\max} w_{\max}d_{\max}.
\end{align*}
In addition,$ \alpha = \lambda^{+}_{\min}(V B^\mathsf{T} M BV) \geq m_{\min}\lambda^{+}_{\min}(L)$, where $\lambda_{\min}^{+}(L)$ is the algebraic connectivity of the graph $\mathcal{G}$. Thus, we conclude that\footnote{This bound holds for any graph $\mathcal{G}$ with $n\geq3$. Moreover, for $n = 2$ the second-smallest eigenvalue can be obtained explicitly.} $\underline{\alpha} \leq \alpha \leq \gamma \leq \overline{\gamma}$ holds for $\underline{\alpha} =  m_{\min} \lambda_{\min}^{+}(L)$ and $\overline{\gamma} = 2w_{\max} m_{\max}d_{\max}$.
    \end{proof}    

\begin{lemma}\label{prop:MgLg}
    Consider $k_{\max} = \max_{i \in \mathcal{N}} k_i$. Then, it holds that $\norm{K_I B V \Gamma_{+}}_{2} \leq \mathcal{S}$ and $\norm{K_I B V \Gamma_{+}}_{F} \leq \mathcal{F}$.
\end{lemma}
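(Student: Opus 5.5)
The plan is to strip off $\Gamma_+$ first and then bound the spectral and Frobenius norms of $K_I B V$ directly in terms of $k_{\max}$ and the edge weights.

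\textbf{Removing $\Gamma_+$.} Since $\Gamma = \begin{bmatrix}\Gamma_+ & \Gamma_0\end{bmatrix}$ is orthogonal, $\Gamma_+$ has orthonormal columns. Hence $\Gamma_+\Gamma_+^\mathsf{T}\preceq I_e$, and both the spectral and Frobenius norms are non-increasing under right multiplication by $\Gamma_+$. Indeed,
\begin{align*}
\norm{X\Gamma_+}_F^2 = \operatorname{tr}(X\Gamma_+\Gamma_+^\mathsf{T}X^\mathsf{T}) \leq \operatorname{tr}(XX^\mathsf{T}) = \norm{X}_F^2,
\end{align*}
and likewise $\norm{X\Gamma_+}_2\leq \norm{X}_2\norm{\Gamma_+}_2 = \norm{X}_2$. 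It therefore suffices to bound the norms of $X \coloneqq K_I B V$.

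\textbf{Spectral norm.} We have $\norm{X}_2^2 = \lambda_{\max}(K_I B W B^\mathsf{T} K_I) = \lambda_{\max}(K_I L K_I)$. Next, $K_I L K_I \preceq \lambda_{\max}(L)\, K_I^2 \preceq k_{\max}\lambda_{\max}(L) I_n$, using $K_I^2 = \diag\{k_i\}$. Lemma~\ref{lem:laplacian_eigenvalues} gives $\lambda_{\max}(L)\le 2w_{\max}d_{\max}$. Combining these yields $\norm{X}_2^2 \le 2 d_{\max}k_{\max}w_{\max} = \mathcal{S}^2$.

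\textbf{Frobenius norm.} Compute directly: $\norm{X}_F^2 = \operatorname{tr}(K_I L K_I) = \sum_{i} k_i L_{ii} = \sum_i k_i \sum_{k} w_{(i,k)}$. Each edge weight appears exactly twice in this sum, once for each endpoint. Therefore
\begin{align*}
\norm{X}_F^2 \le k_{\max}\sum_i \sum_k w_{(i,k)} = 2k_{\max}w_\Sigma \le 2 d_{\max}k_{\max}w_\Sigma = \mathcal{F}^2,
\end{align*}
where $w_\Sigma$ is the sum of all edge weights and the last step uses $d_{\max}\ge 1$, which holds for a connected graph with $n\ge 2$.

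\textbf{Main obstacle.} The only genuinely delicate point is the interpretation of $w_\Sigma$. The paper never defines it explicitly. If it denotes the total edge weight, the argument above is complete with slack. If it denotes a maximal weighted degree, I would instead bound $\sum_i k_i L_{ii}\le n k_{\max}\max_i w_{\Sigma,i}$, and I would need to check that the intended constant absorbs the factor $n$. I will adopt the total-weight interpretation, which is consistent with the factor $d_{\max}$ in $\mathcal{F}$ being a harmless overestimate.
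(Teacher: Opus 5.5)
Your proof is correct. For the spectral norm it is essentially the paper's argument. The paper writes $\norm{K_I B V \Gamma_+}_2 \le \norm{K_I}_2\norm{BV}_2\norm{\Gamma_+}_2 \le \sqrt{k_{\max}\lambda_{\max}(L)}$, passing through the edge Laplacian $L_e = VB^\mathsf{T}BV$, which has the same nonzero spectrum as $L$. It then invokes Lemma~\ref{lem:laplacian_eigenvalues}. You reach the same quantity via the congruence $K_I L K_I \preceq k_{\max}\lambda_{\max}(L) I_n$.

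The Frobenius bound is where your route differs. The paper uses the mixed submultiplicative estimate $\norm{K_I B V\Gamma_+}_F \le \norm{K_I}_2\norm{B}_2\norm{V}_F\norm{\Gamma_+}_2$, together with $\norm{V}_F^2 = w_\Sigma$ and $\norm{B}_2 \le \sqrt{\norm{B}_1\norm{B}_\infty} = \sqrt{2d_{\max}}$. The paper states the latter as an equality, but it is only an upper bound in general. This is where the factor $d_{\max}$ in $\mathcal{F}$ comes from.

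You instead compute $\norm{K_IBV}_F^2 = \sum_i k_i L_{ii}$ exactly and count each edge twice. That gives the sharper estimate $\norm{K_IBV\Gamma_+}_F^2 \le 2k_{\max}w_\Sigma$ and shows the $d_{\max}$ is slack. What the paper's route buys in exchange is a uniform chain of norm inequalities, parallel to the spectral case. It also yields the common factor $2d_{\max}k_{\max}$ in $\mathcal{S}^2$ and $\mathcal{F}^2$, which is used later, e.g.\ $\mathcal{S}/\mathcal{F} = \sqrt{w_{\max}/w_\Sigma}$ in the proof of Proposition~\ref{prop:connectivity}.

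On your stated obstacle: the paper does define $w_\Sigma \coloneqq \sum_{j=1}^{e} w_j$, inside its proof of this lemma. Your total-weight interpretation is therefore the intended one, and no factor $n$ has to be absorbed.
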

\begin{proof}
$\Gamma_{+}$ contains the set of eigenvectors corresponding to the positive eigenvalues of $V B^\mathsf{T} M B V$. Since the columns of $\Gamma_{+}$ are orthonormal, it holds that
$\norm{\Gamma_{+}}_{2} = 1$ and
\begin{align*}
    \norm{K_I B V \Gamma_{+}}_{2} &\leq \norm{K_I}_{2} \norm{BV}_{2} \norm{\Gamma_{+}}_{2}\leq \sqrt{k_{\max} \lambda_{\max}(L_{e})}\\
    &\leq \sqrt{k_{\max}\lambda_{\max}(L)} \leq \mathcal{S}.
\end{align*}
Moreover, it holds that $\norm{K_I B V \Gamma_{+}}_{F} \leq \norm{K_I}_{2} \norm{B}_{2} \norm{V}_{F} \norm{\Gamma_{+}}_{2}   \leq \sqrt{2 d_{\max} k_{\max} w_\Sigma} =  \mathcal{F}$, where we used $w_\Sigma\coloneqq\sum\nolimits_{j=1}^{e}w_j$, $\|B\|_2 = \sqrt{\|B\|_1 \|B\|_{\infty}}$, $\|B\|_{\infty} = d_{\max}$, and  $\|B\|_1 = 2$.
\end{proof}

\bibliographystyle{IEEEtran}
\bibliography{IEEEabrv,extension}

\end{document}

%% file: ieee9bus_table.tex
\begin{table}[H]
    \centering
    \caption{Model and control parameters. For further details see \cite[Table I]{TGA+2020}.}
    \label{table:parameters}
    \scriptsize
    \resizebox{\columnwidth}{!}{
    \begin{tabular}{|@{}c@{}|c|c|c|c|c|c|}
    \hline
    \multirow{2}{*}{VSC} & \multicolumn{3}{c|}{\cellcolor{lightgray}Power [MW]} & \multicolumn{3}{c|}{\cellcolor{lightgray} Control gains [pu]} \\
    \cline{2-7} 
    \rule{0pt}{2.1ex}
    & $P^\star$ & $P_\ell$ & $P_u$ & $m_p$ & $\rho$ & $k_I$ \\
    \hline
    1 & 25 MW & 20 MW & 110 MW & 4.17\% & 1.02 & 40.95 \\
    \hline
    2 & 87.5 MW & 20 MW & 110 MW & 9.38\% & 1.02 & 40.95 \\
    \hline
    3 & 55 MW & 20 MW & 110 MW & 6\% & 1.02 & 40.95\\
    \hline
    \end{tabular}
    }
\end{table}